\newcommand{\qedhere}{%
  \begingroup \let\mathqed\math@qedhere
    \let\@elt\setQED@elt \QED@stack\relax\relax \endgroup
}
\def\comic#1#2#3{\parbox{#1}{\centering\includegraphics[width=#1]{#2}\\{\footnotesize #3}}}
\def\comicII#1#2{\parbox{#1}{\centering\includegraphics[width=#1]{#2}}}
\newcommand{\F}{\mathcal{F}}
\newtheorem{conj}{Conjecture}
\title{Folding Polyominoes into (Poly)Cubes\thanks{
A preliminary extended abstract appears in the Proceedings of the 27th Canadian Conference on Computational Geometry~\cite{abd+-fppc-15}.}}
\author{Oswin Aichholzer}
\address{Institute for Software Technology, Graz University of Technology,
\texttt{oaich@ist.tugraz.at}.}
\author{Michael Biro}
\address{Department of Mathematics and Statistics, Swarthmore College, {\tt mbiro1@swarthmore.edu}.}
\author{Erik D. Demaine, Martin L. Demaine}
\address{Computer Science and Artificial Intelligence Laboratory, Massachusetts Institute of Technology, \texttt{edemaine@mit.edu, mdemaine@mit.edu}.}
\author{David Eppstein}
\address{Computer Science Department, University of California, Irvine, \texttt{eppstein@ics.uci.edu}.}
\author{S\'{a}ndor~P.~Fekete}
\address{Department of Computer Science, TU Braunschweig,
Germany. \texttt{s.fekete@tu-bs.de}}
\author{Adam Hesterberg}
\address{Department of Mathematics, 
Massachusetts Institute of Technology, \texttt{achester@mit.edu}}
\author{Irina Kostitsyna}
\address{Computer Science Department, 
Universit\'{e} Libre de Bruxelles, \texttt{irina.kostitsyna@ulb.ac.be}.}
\author{Christiane Schmidt}
\address{Communications and Transport Systems, ITN, 
Link\"oping University, {\tt christiane.schmidt@liu.se}. }
\begin{document}
\thispagestyle{empty}
\maketitle

\begin{abstract}
We study the problem of folding a polyomino $P$ into a polycube~$Q$,
allowing faces of $Q$ to be covered multiple times.
First, we define a variety of folding models according to whether the folds
(a)~must be along grid lines of $P$ or can divide squares in half
(diagonally and/or orthogonally),
(b)~must be mountain or can be both mountain and valley,
(c)~can remain flat (forming an angle of $180^\circ$), and
(d)~must lie on just the polycube surface or can
have interior faces as well.
Second, we give all the inclusion relations among all models that fold on the
grid lines of~$P$.
Third, we characterize all polyominoes that can fold into a unit cube,
in some models.
Fourth, we give a linear-time dynamic programming algorithm to fold a
tree-shaped polyomino into a constant-size polycube, in some models.
Finally, we consider the triangular version of the problem,
characterizing which polyiamonds fold into a regular tetrahedron.
\end{abstract}

\section{Introduction}\label{sec:intro}

When can a polyomino $P$ be folded into a polycube $Q$?
The answer to this basic question depends on the precise model of
folding allowed.
At the top level, two main types of foldings
have been considered in the literature: \emph{polyhedron folding}
(as in Part~III of \cite{Demaine-O'Rourke-2007}) and \emph{origami folding}
(as in Part~II of \cite{Demaine-O'Rourke-2007}).

Polyhedron folding perhaps originates with \cite{Lubiw-O'Rourke-1996}.
In this variant of the problem,
the folding of $P$ must exactly cover the surface
of~$Q$, with a single layer everywhere.  This requirement is motivated by
sheet metal bending and other manufacturing applications with thick material,
where it is desired to form a given shape with the material without creating layers of doubled thickness.
If we make the additional assumption that at least one square of $P$ maps to a whole square of $Q$,
an easy polynomial-time algorithm can determine whether $P$ folds into
$Q$ in this model: guess this mapping (among the $O(|P| \cdot |Q|)$ choices),
and then follow the forced folding to test whether $Q$ is covered singly
everywhere.  Without this assumption, algorithms for edge-to-edge gluing
of polygons into convex polyhedra
\cite{Lubiw-O'Rourke-1996}, \cite[ch.~25]{Demaine-O'Rourke-2007}
can handle the case where the grid of $P$ does not match the grid of~$Q$,
but only when $Q$ is convex.
Specific to polyominoes and polycubes, there is extensive work in this model
on finding polyominoes that fold into many different polycubes
\cite{AloBosCol-CCGA-10} and into multiple different boxes
\cite{AbeDemDem-CCCG-11,Mitani-Uehara-2008,Shirakawa-Uehara-2013,Uehara-box-survey,Xu2015}

In origami folding, the folding of $P$ is allowed to multiply cover the
surface of $Q$, as long  every point of $Q$ is covered by at least one layer of~$P$.
This model is motivated by folding thinner material such as paper,
where multiple layers of material are harmless,
and is the model considered here.
General origami folding results \cite{Demaine-Demaine-Mitchell-2000}
show that every polycube, and indeed every polyhedron, can be folded from
a polyomino of sufficiently large area.
Two additional results consider the specific case of folding a polycube,
with precise bounds on the required area of paper.
First, every polycube $Q$ can be folded from a sufficiently large $\Omega(|Q|) \times \Omega(|Q|)$ square polyomino
\cite{Benbernou-Demaine-Demaine-Ovadya-2009}.
(This result also folds all the interior grid faces of~$Q$.)
Second, every polycube $Q$ can be folded from a
$2 |Q| \times 1$ rectangular polyomino,
with at most two layers at any point
\cite{Benbernou-Demaine-Demaine-Lubiw-2017}.
Both of these universality results place all creases on a fixed
\emph{hinge pattern} (typically a square grid plus some diagonals),
so that the target polycube $Q$ can be specified
just by changing the fold angles in a polyomino $P$
whose shape depends only on the size~$|Q|$ and not on the shape of~$Q$.

\begin{figure}
  \centering
  \includegraphics{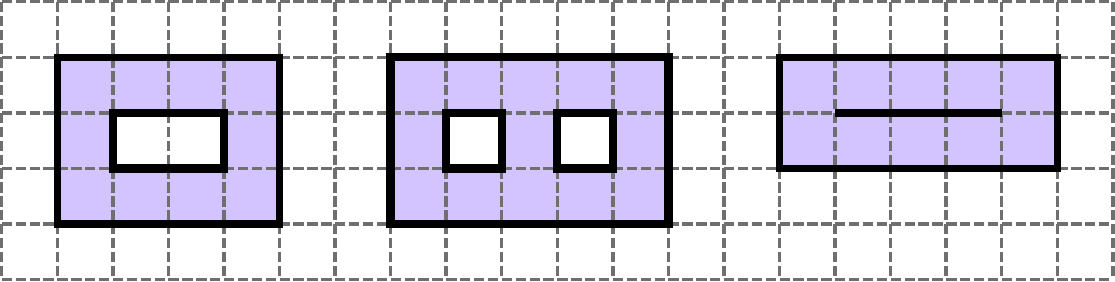}
  \caption{Three polyominoes that fold along the grid lines into a unit cube,
    from puzzles by Nikolai Beluhov \cite{Beluhov-2014}.}
  \label{puzzles}
\end{figure}

In this paper, we focus on the decision question of whether a given
polyomino~$P$ can fold into a given polycube~$Q$, in the origami folding
model.  In particular, we characterize the case when $Q$ is a single cube,
a problem motivated by three puzzles by Nikolai Beluhov \cite{Beluhov-2014};
see Figure~\ref{puzzles}.
This seemingly simple problem already turns out to be surprisingly intricate.

We start in Section~\ref{sec:not}
by refining different folding models that allow or disallow various types of folds:
(a)~folds being just along grid lines of $P$ or also along diagonal/orthogonal
bisectors of grid squares,
(b)~folds being all mountains or a combination of mountain and valley,
(c)~folds being strictly folded or can remain flat
(forming an angle of $180^\circ$), and
(d)~ folds being restricted to lie on just the polycube surface or folds being allowed to
have interior faces as well.
Then, in Section~\ref{sec:foldhier}, we characterize all grid-aligned models
into a hierarchy of the models' relative power.

Next, in Section~\ref{sec:char}, we characterize all the polyominoes that
can be folded into a unit cube, in grid-based models.
In particular, we show that \emph{all} polyominoes of at least ten unit squares
can fold into a unit cube using grid and diagonal folds, and list all smaller
polyominoes that cannot fold into a unit cube in this model.
Observe that the polyominoes of Figure~\ref{puzzles} each consist of at least
ten unit squares, though they furthermore fold into a cube using just
grid-aligned folds.  In this more strict model, we characterize which
tree-shaped polyominoes lying in a $2 \times n$ and $3 \times n$ strip fold into a unit cube.
This case matches the puzzles of Figure~\ref{puzzles} except in the
``tree-shaped'' requirement; it remains open to
characterize cyclic polyominoes that fold grid-aligned into a unit cube.

We also give some algorithmic results and consider the triangular grid.
In Section~\ref{sec:dp-tree}, we give a linear-time dynamic programming
algorithm to fold a tree-shaped polyomino into a constant-size polycube,
in grid-based models.
In Section~\ref{sec:triang}, we consider the analogous problems
on the triangular grid, characterizing (rather easily) which polyiamonds fold
into a regular tetrahedron, in grid-based models.
Finally, we present some open problems in Section~\ref{sec:con}.

\section{Notation}\label{sec:not}
A \emph{polyomino} $P$ is a two-dimensional polygon formed by a union of  $|P|=n$ \emph{unit
squares} on the square lattice connected edge-to-edge. We do not require all adjacent pairs of squares to be connected to each other; that is, we allow ``cuts'' along the edges of the lattice.
A polyomino is a \emph{tree shape} if the dual graph of its unit squares (the graph with a vertex per square and an edge for each connected pair of squares) is a tree.  Analogously, in one dimension higher, a \emph{polycube}
is a connected three-dimensional polyhedron formed by a union of \emph{unit cubes} on the
cubic lattice connected face-to-face.
If a polyomino or polycube is a rectangular box,
we refer to it by its exterior dimensions, e.g., a
2$\times$1 polyomino (domino) or a 2$\times$2$\times$1 polycube. We denote the special case of a single unit cube by $C$.

We study the problem of folding a given polyomino $P$ to form a given polycube~$Q$, allowing various different combinations of types of folds: \emph{axis-aligned}
$+90^{\circ}$ and $+180^{\circ}$ mountain folds, $-90^{\circ}$ and $-180^{\circ}$ valley folds, 
fold angles of \emph{any} degree, \emph{diagonal folds} through opposite corners of a square, and
\emph{half-grid} folds that bisect a unit square in an axis-parallel fashion.%
\footnote{We measure fold angles as dihedral angles between the two incident
  faces, signed positively for mountain (convex) and negatively for valley
  (reflex) folds.}

\begin{figure}[b]
\centering
\subfloat[]{\label{fig:int-f-a}
\includegraphics[scale=0.5]{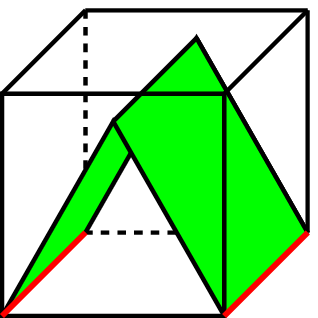}}
\hfil
\subfloat[]{\label{fig:int-f-b}
\includegraphics[scale=0.5]{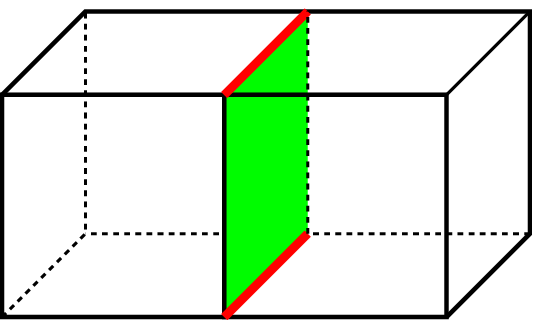}}
\caption{\label{fig:int-f} Examples for interior faces shown in green. They connect to the red cube edges.}
\end{figure}

All faces of $Q$ must be entirely covered by folded faces of $P$, but in some cases we allow some faces of $P$ to lie inside $Q$ without covering any face of $Q$.
A face of $P$ is an \emph{interior face} of $Q$ if it is not folded onto any of the (boundary) faces of~$Q$.
For example, if we consider a chain of unit squares in a polyomino, two folds of $+60^{\circ}$ along edges of two opposite edges of a unit square of $P$ result in a V-shape with two interior faces; 
see Figure~\ref{fig:int-f-a} and~\subref{fig:int-f-b} for examples.
A folding model $\mathcal F$ specifies a subset of $\F = \{+90^{\circ}$,$ -90^{\circ}$, $+180^{\circ}$, $-180^{\circ}$, any$^{\circ}$; grid; interior faces; diagonal; half-grid$\}$
as allowable folds. 


\section{Folding hierarchy}\label{sec:foldhier}


We say that model $\F_{x}$ is stronger than $\F_{y}$ (denoting this relation by $\F_{x}\ge \F_{y}$) if,
for all polyomino-polycube pairs $(P,Q)$ such that $P$ folds
into $Q$ in $\F_{y}$, $P$ also folds into $Q$ in $\F_{x}$. If there also exists a pair $(P',Q')$ such that 
$P'$ folds into $Q'$ in $\F_{x}$, but not in $\F_{y}$, then $\F_{x}$ is strictly stronger than $\F_{y}$ ($\F_{x}> \F_{y}$).
The relation `$\ge$' satisfies the properties of reflexivity, transitivity and antisymmetry, and therefore it defines a partial order on the set of folding models.
Figure~\ref{fig:hierarchy} shows the resulting hierarchy of the folding models that consist of combinations of the following
folds: $\{+90^{\circ}$,$ -90^{\circ}$, $+180^{\circ}$, $-180^{\circ}$,
any$^{\circ}$; interior faces; grid$\}$. 

\begin{figure}[h!]
\centering
    \includegraphics{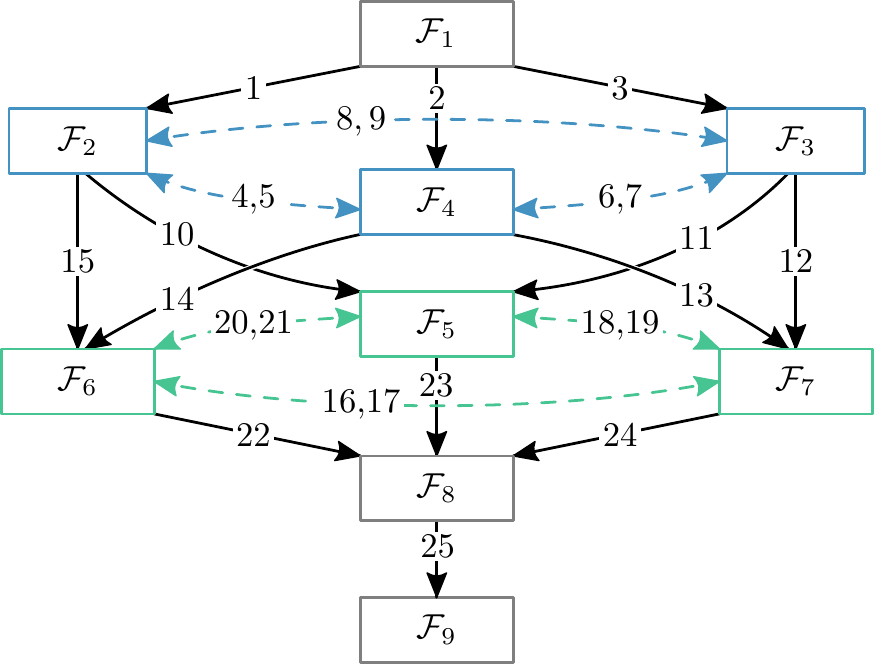}
         \caption{\label{fig:hierarchy} Hierarchy of fold operations. The nodes correspond to the following folding models: $\F_{1}=\{+90^{\circ}$; grid$\}$, $\F_{2}=\{\pm 90^{\circ};$ grid$\}$, $\F_{3}=\{+90^{\circ};$ interior faces; grid$\}$, $\F_{4}=\{+90^{\circ},180^{\circ};$ grid$\}$, $\F_{5}=\{\pm 90^{\circ};$ interior faces; grid$\}$, $\F_{6}=\{\pm 90^{\circ},180^{\circ};$ grid$\}$, $\F_{7}=\{+90^{\circ},180^{\circ};$ interior faces; grid$\}$, $\F_{8}=\{\pm 90^{\circ},180^{\circ}$; interior faces; grid$\}$, and $ \F_{9}=\{\text{any}^{\circ};$ interior faces; grid$\}$. A black arrow from $\F_y$ to $\F_x$ indicates that $\F_x>\F_y$. Blue and green arrows indicate incomparable models.
        The labels on arrows correspond to the relations' numbers from the proof of Theorem~\ref{le:hier}.
         }
\end{figure}

\begin{figure}
\centering
    \includegraphics[width=0.4\textwidth]{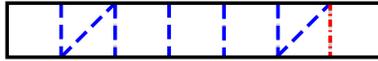}
         \caption{\label{fig:1x7} A 1$\times$7 
polyomino can be folded into a unit cube $C$ in model $\{\pm 90^{\circ},180^{\circ}$; grid; 
diagonal$\}$. The mountain and valley folds are shown in red and blue, respectively.
         }
\end{figure}

\begin{figure}
\centering
\subfloat[]{\label{fig:lb9-a}
\includegraphics[width=.225\textwidth]{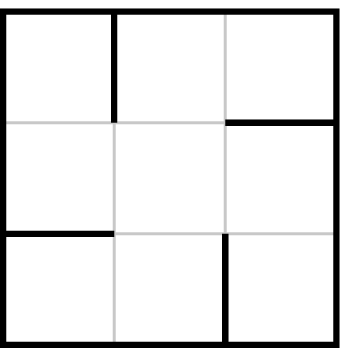}}
\hfil
\subfloat[]{\label{fig:lb9-b}
\includegraphics[width=.225\textwidth]{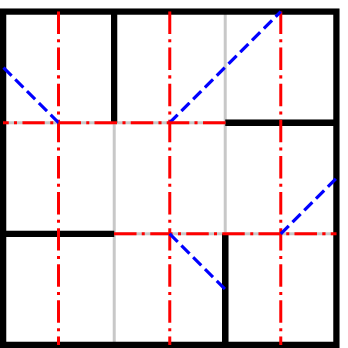}}
\caption{\label{fig:lb9}  In case we allow $\pm 90^{\circ},180^{\circ}$, half-grid folds, and diagonal folds only, shape $P$ does not fold to a unit cube $C$ . However, if we do not require faces of $C$ to be covered by full unit squares of $P$, $P$ does fold into $C$, the mountain (red) and valley folds (blue) are shown in \protect\subref{fig:lb9-b}.}
\end{figure}

Integrating diagonal and half-grid folds (which we omit from the hierarchy of Figure~\ref{fig:hierarchy})
can result in even stronger models. For instance, a 1$\times$7 
polyomino can be folded into a unit cube $C$ in model $\{\pm 90^{\circ},180^{\circ}$; grid; 
diagonal$\}$ (see Figure~\ref{fig:1x7}), but not in $\{$any$^{\circ}\text{; interior faces; grid}\}$ (the strongest model shown in Figure~\ref{fig:hierarchy}). The example shown in
Figure~\ref{fig:lb9} shows that $\F_{all} = \{$any$^{\circ}$;
interior faces; grid; diagonal; half-grid$\}$ is strictly stronger than $\F = \{$
any$^{\circ}$; interior faces; grid; diagonal$\}$.

We need the following lemma to prove the hierarchy of Figure~\ref{fig:hierarchy}.
\begin{figure}
\centering
\includegraphics[height=0.2\textwidth]{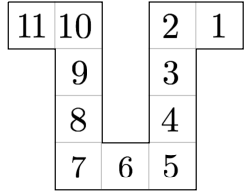}
\caption{\label{fig:int} The shape $P_2$ needs interior faces to be folded into a $2\times1\times1$ polycube.}
\end{figure}

\begin{figure}
    \comic{.23\textwidth}{hier-a.eps}{\vspace*{3.4cm}(a)}
\hfill
\comic{.17\textwidth}{hier-b-2.eps}{\vspace*{3.1cm}(b)}
\hfill
\comic{.4\textwidth}{hier-d.eps}{(c)}
\\
\center
\vspace*{.5cm}
\comic{.6\textwidth}{hier-e1}{(d)}\\\vspace*{.8cm}
\comic{.12\textwidth}{unitcube}{\vspace*{1.5cm}(e)}\hfill
\comic{.192\textwidth}{Q2}{\vspace*{1.4cm}(f)}\hfill
\comic{.264\textwidth}{Q3}{(g)}\hfill
\comic{.264\textwidth}{Q4}{\vspace*{.75cm}(h)}
         \caption{\label{fig:hierex} Examples for Theorem.~\ref{le:hier}: polyominoes (a) $P_1$, (b) $P_2$, (c) $P_3$, and (d) $P_4$; and polycubes (e) a unit cube $Q_1=C$, (f) a 2$\times$1$\times$1 polycube $Q_2$, (g) a 3$\times$3$\times$2 polycube $Q_3$ with a 1-cube hole centered in a 3$\times$3 face, and (h) a $5$-cube cross $Q_4$.
}
\end{figure} 

\begin{lemma}\label{le:b-int}
Folding the tree shape $P_2$ shown in Figure~\ref{fig:hierex}(b) into the $2\times1\times1$ polycube $Q_2$ shown in Figure~\ref{fig:hierex}(f) with 
$\F_{8}=\{\pm 90^{\circ},180^{\circ}\text{; interior faces; grid}\}$ requires interior faces. (Either of the faces $3$ or $9$ shown in Figure~\ref{fig:int} can be the interior face.)
\end{lemma}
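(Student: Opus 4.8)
The plan is to prove Lemma~\ref{le:b-int} by a careful case analysis on the possible ways the squares of $P_2$ can cover the faces of the $2\times1\times1$ polycube $Q_2$, showing that in every valid folding at least one square (either face $3$ or face $9$ in Figure~\ref{fig:int}) must lie in the interior.

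First I would set up the combinatorial framework. The polycube $Q_2$ has exactly $10$ unit-square faces (two $1\times1$ end caps plus the $2\times1$ sides, top, and bottom, each of which splits into two unit squares along the middle gridline). Since we are in the grid-aligned model $\F_8$ with only $\pm90^\circ$ and $180^\circ$ folds, every square of $P_2$ maps rigidly onto some face square of $Q_2$ or lies interior to $Q_2$; no square can straddle two faces. I would fix a reference square of $P_2$, place it on a chosen face of $Q_2$ without loss of generality (exploiting the symmetries of $Q_2$ to reduce the number of starting configurations), and then propagate the forced constraint that adjacent squares in $P_2$ (those sharing an uncut edge) must fold to faces sharing the corresponding cube edge, with the fold angle determined by the dihedral relationship between those two faces.

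The central step is to trace the folding along the tree structure of $P_2$. Because $P_2$ is a tree shape, each connected pair of squares forces a local fold choice, and I would argue that following any branch around the surface of $Q_2$ eventually forces one of the two distinguished squares to ``overshoot'' the surface: the geometry of the $2\times1\times1$ box does not offer enough surface area along the required path to seat that square as a genuine face, so it is driven inward and becomes interior. Concretely, I expect to show that faces $3$ and $9$ occupy symmetric positions in $P_2$ such that once the rest of $P_2$ is seated on the ten boundary faces, the remaining square has no uncovered boundary face available in a compatible orientation reachable by a single grid fold, leaving the interior as the only option. A short parity or counting argument — comparing the number of squares of $P_2$ to the number of boundary faces of $Q_2$ and the adjacency constraints — should make it rigorous that no assignment avoids placing one of these squares interior.

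The main obstacle will be managing the combinatorial explosion of folding configurations without an exhaustive and unreadable case check. I would mitigate this by using the symmetry group of $Q_2$ to collapse starting placements, and by propagating constraints greedily so that most branches terminate quickly in contradiction (a face covered twice before all of $Q_2$ is covered, or a square forced off the surface). The delicate part is verifying that both the ``$3$ interior'' and ``$9$ interior'' options are genuinely realizable (establishing the parenthetical claim), which I would handle by exhibiting the two explicit valid foldings and checking that they differ only in which of the symmetric squares is pushed inward; this simultaneously proves that interior faces suffice and that they are necessary.
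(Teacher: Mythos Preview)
Your outline has the right skeleton (fix a placement, propagate along the tree, exploit symmetry), but it misses the paper's key simplifying observation and is over-optimistic in two places.

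First, the counting argument you gesture at does not do what you want. From $|P_2|=11$ and $|\partial Q_2|=10$ you only learn that \emph{at most one} square may be wasted (doubled or interior); to rule out a folding without interior faces you must show that every surface-only folding forces \emph{at least two} overlaps. That is what the paper proves, and no parity trick shortcuts it.

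Second, in $\F_8$ the fold angle at an edge is \emph{not} forced by the placement of one square: at every tree edge you may choose $+90^\circ$, $-90^\circ$, or $\pm180^\circ$, so constraint propagation branches heavily. The paper eliminates this branching with an orientation argument you did not identify: every exterior dihedral of $Q_2$ is convex with the same sign, so in the absence of $180^\circ$ folds a surface-only folding must use a single sign throughout, reducing the problem to the $\{+90^\circ\}$ model where propagation \emph{is} essentially forced. Separately one checks that any $180^\circ$ fold already costs one square and the resulting $10$-square shapes cannot cover $Q_2$ without a second overlap. After that reduction, the paper finishes with a short case split on the position of the central square (face~$5$), showing two forced overlaps in each branch.

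Without this orientation/convexity step your case analysis will work in principle but will be substantially larger than you anticipate, and your write-up as it stands does not indicate how you would rule out the mixed $\pm90^\circ$ and $180^\circ$ foldings.
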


\begin{proof}
We label the faces of $P_2$ as shown in Figure~\ref{fig:int}. The case analysis below shows that a folding without interior faces does not exist, as in every case we are forced to fold away at least two faces. Since $|P_2|=11$ and $Q_2$ has 10 faces, there will not be enough remaining faces to complete the folding. 

First note that any $\pm 180^\circ$ folds will reduce the figure to at most $10$ squares, and it is easy to check that these cannot be folded to $Q_2$ without an additional overlap. Furthermore, all of the folds between pairs of exterior faces of $Q_2$ have the same orientation, so (in the absence of $\pm 180$ degree folds to reverse the orientation) any fold from $P_2$ to $Q_2$ that avoids interior faces can only use one of the two angles  $+90^\circ$ or $-90^\circ$. Therefore, it is sufficient to check that $P_2$ cannot be folded to $Q_2$ using only $+90^\circ$ folds and only exterior faces.

If face $5$ covers one of the 1$\times$1 faces of $Q_2$, faces $9$ and $11$ are forced to cover faces $1$ and $2$, respectively, thus doubling two faces. If face $5$ covers part of a $1 \times 2$ face of $Q_2$, then there are two subcases: If face $4$ or face $6$ covers the adjacent $1 \times 1$ face of $Q_2$ then faces $7$ and $8$ are forced to cover faces $3$ and $4$, respectively, thus doubling two faces. If face $4$ or face $6$ covers the other half of the $1 \times 2$ face of $Q_2$ then faces $9$, $10$ and $11$ are forced to collectively cover faces $1$, $3$ and $4$, thus doubling at least two faces. Therefore, it is not possible to fold $P$ into $Q_2$ without interior faces.

\end{proof}

The following establishes the relationships between models presented in Figure~\ref{fig:hierarchy}.


\begin{theorem}\label{le:hier}
The folding models consisting of combinations of the following folds \{$+90^{\circ}$, $-90^{\circ}$, $+180^{\circ}$, $-180^{\circ}$, any$^{\circ}$; 
interior faces; grid\} have the mutual relations presented in Figure~\ref{fig:hierarchy}. In particular, these mutual relations hold for polyominoes without holes.
\end{theorem}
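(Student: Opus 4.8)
The plan is to verify the partial order depicted in Figure~\ref{fig:hierarchy} by establishing each covering relation (and each strictness and incomparability claim) separately, using the numbered arrows as a checklist. The proof naturally splits into two kinds of work: the ``$\ge$'' containments, which are structural and follow from the observation that allowing a strictly larger set of fold types can only enlarge the set of foldable pairs, and the strictness/incomparability claims, which require explicit witness pairs $(P,Q)$ that separate the models. For the containments, I would first record the trivial monotonicity principle: if $\F_x \supseteq \F_y$ as sets of allowed folds, then $\F_x \ge \F_y$, since any folding valid in $\F_y$ is literally a folding in $\F_x$. This immediately gives all arrows where one model's fold set contains another's (e.g.\ $\F_2 \supseteq \F_1$, $\F_6 \supseteq \F_4$, $\F_8 \supseteq$ everything below it, and $\F_9 \ge \F_8$ since any$^\circ$ subsumes the fixed $\pm90^\circ,180^\circ$ angles). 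Reflexivity, transitivity, and antisymmetry of $\ge$ are already noted in the text, so the Hasse diagram is determined once I verify the covering edges and non-edges.

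The substantive part is producing separating examples. For each strict-inequality arrow $\F_x > \F_y$ I must exhibit $(P',Q')$ foldable in $\F_x$ but not in $\F_y$; for each incomparability (the blue and green arrows) I must exhibit a pair foldable in one but not the other \emph{in both directions}. The figures already stock the needed gadgets: Lemma~\ref{le:b-int} supplies exactly the separation showing interior faces add power, since $P_2$ folds into $Q_2$ only when interior faces are permitted---this certifies the arrows from the non-interior models into their interior-enabled counterparts (e.g.\ $\F_4 < \F_7$, $\F_6 < \F_8$). The remaining separations I would obtain from the polyomino/polycube zoo of Figure~\ref{fig:hierex}: $P_1$ against the unit cube $Q_1=C$ to separate models differing in valley folds or in $180^\circ$ folds, $P_3$ against the holed polycube $Q_3$, and $P_4$ against the cross $Q_4$. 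For each such claim the ``folds in $\F_x$'' direction is witnessed by drawing the explicit crease assignment (mountain/valley angles), while the ``does not fold in $\F_y$'' direction demands an impossibility argument.

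The main obstacle, therefore, is the collection of negative (non-foldability) arguments rather than the positive constructions. Each negative claim is a finite but genuinely intricate case analysis of the forced-folding type exemplified in the proof of Lemma~\ref{le:b-int}: one counts faces (using that a polyomino of $n$ squares folding onto a polycube with $m$ boundary faces must cover every face at least once, so an excess of forced doublings beyond $n-m$ yields a contradiction), tracks how the choice of which square covers which face propagates by adjacency, and rules out each branch. I would handle the $180^\circ$-versus-no-$180^\circ$ separations by the observation (already used in Lemma~\ref{le:b-int}) that, absent $\pm180^\circ$ folds, all folds between adjacent exterior faces of the target have a consistent orientation, so only one of $+90^\circ$ or $-90^\circ$ can be used in a surface-only folding; this pins down the sign of every crease and makes the case tree finite. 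The final sentence of the theorem---that all relations persist for polyominoes \emph{without holes}---I would address by checking that every separating polyomino $P'$ chosen above is itself hole-free (the shapes in Figure~\ref{fig:hierex} are simply connected), so no separate argument is needed beyond confirming that the witnesses can be taken without holes.
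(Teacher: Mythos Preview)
Your proposal is correct and follows essentially the same approach as the paper: the containments are noted as trivial by monotonicity, and the strictness and incomparability claims are witnessed by exactly the four polyomino--polycube pairs $(P_1,Q_1)$, $(P_2,Q_2)$, $(P_3,Q_3)$, $(P_4,Q_4)$ of Figure~\ref{fig:hierex}, with Lemma~\ref{le:b-int} supplying the interior-face separation and the hole-free clause handled by observing that all witnesses are simply connected. The paper organizes the verification by running through the 25 labelled arrows one by one with the appropriate example, which is precisely the checklist you describe.
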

\begin{proof}
First note that all containments are obvious: for any two models $\F_x$ and $\F_y$ corresponding to two nodes of the hierarchy given in Figure~\ref{fig:hierarchy} such that there is an arrow from $\F_y$ to $\F_x$, trivially $\F_x \geq \F_y$. It remains to give examples to show that either two models are incomparable or that one is strictly stronger than the other, that is, to show that there exists a polyomino that folds into a given polycube under one folding model, but not under the other.
All cases can be shown using the 
polyomino-polycube pairs 
shown in Figure~\ref{fig:hierex}. They fold with: (a) $+90^{\circ}$, $180^{\circ}$ folds, (b) $+90^{\circ}$ and interior faces (see Lemma~\ref{le:b-int}), (c) $\pm90^{\circ}$folds, (d) $+90^{\circ}$, $180^{\circ}$ and interior faces.


Example (a) establishes relations $2$, $4$, $7$, $12$, $15$, $19$, $20$, $23$.

Example (b) establishes relations $3$, $6$, $9$, $10$, $13$, $16$, $21$, $22$.

Example (c) establishes relations $1$, $5$, $8$, $11$, $14$, $17$, $18$, $24$.

Example (d) establishes relation $25$.

Let $\F_{1}=\{+90^{\circ}\text{; grid}\}$, $\F_{2}=\{\pm 90^{\circ}\text{; grid}\}$, $\F_{3}=\{+90^{\circ}\text{; interior faces; grid}\}$, $\F_{4}=\{+90^{\circ}, 180^{\circ}\text{; grid}\}$, $\F_{5}=\{\pm 90^{\circ}\text{; interior faces;}\text{ grid}\}$, $\F_{6}=\{\pm 90^{\circ}, 180^{\circ}\text{; grid}\}$, $\F_{7}=\{+90^{\circ}, 180^{\circ}\text{; interior faces; grid}\}$, $ \F_{8}=\{\pm 90^{\circ}, 180^{\circ}\text{; interior faces; grid}\}$, and $ \F_{9}=\{\text{any}^{\circ}\text{; interior faces; grid}\}$.

First we will give a detailed proof for {\bf Relation 25}. We will then present the proofs for the other relations more briefly.

{\bf Relation 25}: $\F_{9}=\{\text{any}^{\circ}\text{; interior faces; grid}\}$ is strictly stronger than $\F_{8}=\{\pm 90^{\circ}, 180^{\circ}\text{; interior faces; grid}\}$. Any polyomino $P$ that folds into polycube $Q$ in~$\F_{8}$ also folds into $Q$ in $\F_{9}$. To prove a strict relation, we must show that there exists some polyomino $P'$ that folds into some polycube $Q'$ in $\F_{9}$, but that does not fold into $Q'$ in $\F_{8}$.
Let $Q'$ consist of five cubes forming a cross as in Figure~\ref{fig:hierex}(h) ($Q'=Q_4$), and let $P'$ be as in Figure~\ref{fig:hierex}(d) ($P'=P_4$). Assume that $P'$ can be folded into $Q'$ in the folding model~$\F_{8}$. 
Polyomino $P'$ consists of $24$ unit squares, while $Q'$ has $22$ square faces on its surface. Therefore, $22$ out of the $24$ squares of $P'$ will be the faces of~$Q'$ when folded. Consider the 12$\times$1 subpolyomino of $P'$. When folded, it has to form the ``walls'' of the cross $Q'$, that is, the vertical faces of $Q'$, otherwise this strip can form not more than eight faces of the cross (looping around one of the 3$\times$1$\times$1 subpolycubes). It is straightforward to see now, that in $\F_{8}$ the two yellow squares prevent $P'$ from folding into $Q'$. However, in $\F_{9}$ the two yellow squares can form interior faces with a $60^{\circ}$ interior fold. Therefore, $\F_{8}<\F_{9}$.

{\bf Relation 1:} We show that $P_3$ folds into $Q_3$ in~$\F_{2}$, but does not fold into $Q_3$ in~$\F_{1}$. $|P_3|=46$, and $Q_3$ has $46$ square faces on its surface. Hence, all squares of $P_3$ are faces of $Q_3$. Consider the $3\times 9$ subpolyomino, $P_{3\times 9}$, of $P_3$. When folded with  $+90^{\circ}$ only, there exist only two positions for this subpolyomino in $Q_3$. In both positions the center $3\times 3$ piece covers $Q_3$'s $3\times 3$ face; the two positions differ only by a rotation of $90^{\circ}$. But then the two flaps of size one and two, respectively, adjacent to $P_{3\times 9}$ need to fold into the 1-cube hole, using $90^{\circ}$ folds at the flap grid lines adjacent to $P_{3\times 9}$. But a $90^{\circ}$ fold at the remaining grid line of the size two flap does not result in the outer square covering a separate face of $Q_3$. This contradicts the assumption that all squares of $P_3$ are faces of~$Q_3$.

{\bf Relation 2:} We show that $P_1$ folds into $Q_1$ in~$\F_{4}$, but does not fold into $Q_1$ in $\F_{1}$. $|P_1|=9$, while $Q_1$ has $6$ square faces on its surface. Using $+90^{\circ}$ folds the two $1\times 4$ subpolyominoes of $P_1$ fold to two rings of size four. Folding along both grid lines adjacent to the connecting square with $+90^{\circ}$ those two rings overlap. This results in an overlap of four faces, making it impossible to constitute a shape of $6$ faces with the $9$-square shape $P_1$. In $\F_{4}$ one grid line adjacent to the connecting square is folded with $180^{\circ}$, the other with $+90^{\circ}$. Folding all other grid lines with~$+90^{\circ}$ results in $Q_1$.

{\bf Relation 3:} We show that $P_2$ folds into $Q_2$ in $\F_{3}$, but does not fold into $Q_2$ in~$\F_{1}$. $|P_2|=11$, while $Q_2$ has $10$ square faces on its surface. Hence, only one square of $P_2$ can either overlap with another square or constitute an interior face of $Q_2$.
$\F_{1}$ does not allow interior faces; hence, Lemma~\ref{le:b-int} shows that $P_2$ does not fold into $Q_2$ in $\F_{1}$.

{\bf Relation 4:} We show that $P_1$ folds into $Q_1$ in $\F_{4}$ (see proof of Relation 2), but does not fold into $Q_1$ in $\F_{2}$. $|P_1|=9$, while $Q_1$ has $6$ square faces on its surface. To cover $Q_1$ at least one of the $1\times 4$ subpolyominoes of $P_1$ must be folded into a ring of size four using either $+90^{\circ}$ folds only or $-90^{\circ}$ folds only; without loss of generality we assume that the fold uses only $+90^{\circ}$ folds. If we fold the grid line of the connecting square adjacent to this ring with $-90^{\circ}$, the connecting square does not cover a face of $Q_1$, thus we need to use a $+90^{\circ}$ fold. The same holds true for the other grid line adjacent to the connecting square, and for the remaining grid lines. Then the proof of Relation 2 shows that we cannot fold $P_1$ into $Q_1$ using $+90^{\circ}$ folds only.

{\bf Relation 5:} We show that $P_3$ folds into $Q_3$ in $\F_{2}$ (see proof of Relation 1), but does not fold into $Q_3$ in $\F_{4}$. $|P_3|=46$, and $Q_3$ has $46$ square faces on its surface. Hence, all squares of $P_3$ are faces of $Q_3$. Again, there only exist two positions for~$P_{3\times 9}$. The two flaps of size one and two adjacent to $P_{3\times 9}$ again need to constitute faces of the 1-cube hole. The flap grid lines adjacent to $P_{3\times 9}$ need to be folded with $+90^{\circ}$ folds, but then neither a $+90^{\circ}$ fold, nor a $180^{\circ}$ fold at the remaining grid line of the size two flap, result in the outer square covering a separate face of $Q_3$.

{\bf Relation 6:} We show that $P_2$ folds into $Q_2$ in $\F_{3}$ (see proof of Relation 3), but does not fold into $Q_2$ in $\F_{4}$. No interior faces are allowed in $\F_{4}$; hence, Lemma~\ref{le:b-int} shows that $P_2$ does not fold into $Q_2$ in $\F_{4}$.

{\bf Relation 7:} We show that $P_1$ folds into $Q_1$ in $\F_{4}$ (see proof of Relation 2), but does not fold into $Q_1$ in $\F_{3}$. $|P_1|=9$, while $Q_1$ has $6$ square faces on its surface. The proof of Relation 2 showed that using only $+90^{\circ}$ folds is insufficient to fold $P_1$ into~$Q_1$. That is, we would have to use interior faces to fold $P_1$ into $Q_1$ in $\F_{3}$. As~$Q_1$ is a unit cube, we need two folds of $+60^{\circ}$, an infeasible grid fold in $\F_{3}$. 

{\bf Relation 8:} We show that $P_3$ folds into $Q_3$ in $\F_{2}$ (see proof of Relation 1), but does not fold into $Q_3$ in $\F_{3}$. $|P_3|=46$, and $Q_3$ has $46$ square faces on its surface. Hence, all squares of $P_3$ are faces of $Q_3$. In the proof of Relation 1 we showed that $P_3$ does not fold into $Q_3$ in $\F_{1}=\{+90^{\circ}\text{; grid}\}$. As $\F_{3}=\{+90^{\circ}\text{; interior faces; grid}\},$ we need to use interior faces to fold $P_3$ into $Q_3$. But any such interior face would be a square of $P_3$ that is not a face of $Q_3$.

{\bf Relation 9:} We show that $P_2$ folds into $Q_2$ in $\F_{3}$ (see proof of Relation 3), but does not fold into $Q_2$ in $\F_{2}$. No interior faces are allowed in $\F_{2}$; hence, Lemma~\ref{le:b-int} shows that $P_2$ does not fold into $Q_2$ in $\F_{2}$.

{\bf Relation 10:} We show that $P_2$ folds into $Q_2$ in $\F_{5}$, but does not fold into $Q_2$ in~$\F_{2}$ (see proof of Relation 9). In the proof of Relation 3 we showed that $P_2$ folds into~$Q_2$ in $\F_{3}$. As $ \F_{3}=\{+90^{\circ}\text{; interior faces; grid}\} \subset \F_{5}=\{\pm 90^{\circ}\text{; interior faces; grid}\},$ this holds true for $\F_{5}$.

{\bf Relation 11:} We show that $P_3$ folds into $Q_3$ in $\F_{5}$, but does not fold into $Q_3$ in $\F_{3}$ (see proof of Relation 8). In the proof of Relation 5 we showed that $P_3$ folds into $Q_3$ in~$\F_{2}$. As $ \F_{2}=\{\pm 90^{\circ}\text{; grid}\}\subset \F_{5}=\{\pm 90^{\circ}\text{; interior faces; grid}\},$ this holds true for $\F_{5}$.

{\bf Relation 12:} We show that $P_1$ folds into $Q_1$ in $\F_{7}$, but does not fold into $Q_1$ in $\F_{3}$ (see proof of Relation 7). In the proof of Relation 7 we showed that $P_1$ folds into $Q_1$ in $\F_{4}$. As $ \F_{4}=\{+90^{\circ}, 180^{\circ}\text{; grid}\}\subset \F_{7}=\{+90^{\circ}, 180^{\circ}\text{; interior faces; grid}\},$ this holds true for $\F_{7}$.

{\bf Relation 13:} We show that $P_2$ folds into $Q_2$ in $\F_{7}$, but does not fold into $Q_2$ in~$\F_{4}$ (see proof of Relation 6). In the proof of Relation 6 we showed that $P_2$ folds into~$Q_2$ in~$\F_{3}$. As $ \F_{3}=\{+90^{\circ}\text{; interior faces; grid}\}\subset \F_{7}=\{+90^{\circ}, 180^{\circ}\text{; interior faces; grid}\},$ this holds true for $\F_{7}$.

{\bf Relation 14:} We show that $P_3$ folds into $Q_3$ in $\F_{6}$, but does not fold into~$Q_3$ in~$\F_{4}$ (see proof of Relation 5). In the proof of Relation 5 we showed that $P_3$ folds into~$Q_3$ in~$\F_{2}$. As $ \F_{2}=\{\pm 90^{\circ}\text{; grid}\}\subset \F_{6}=\{\pm 90^{\circ}, 180^{\circ}\text{; grid}\},$ this holds true for~$\F_{6}$.

{\bf Relation 15:} We show that $P_1$ folds into $Q_1$ in $\F_{6}$, but does not fold into $Q_1$ in~$\F_{1}$ (see proof of Relation 4). In the proof of Relation 2 we showed that $P_1$ folds into~$Q_1$ in~$\F_{4}$. As $ \F_{4}=\{+90^{\circ}, 180^{\circ}\text{; grid}\}\subset \F_{6}=\{\pm 90^{\circ}, 180^{\circ}\text{; grid}\},$ this holds true for $\F_{6}$.

{\bf Relation 16:} We show that $P_2$ folds into $Q_2$ in $\F_{7}$, but does not fold into $Q_2$ in $\F_{6}$. No interior faces are allowed in $\F_{6}$; hence, Lemma~\ref{le:b-int} shows that $P_2$ does not fold into $Q_2$ in $\F_{6}$. In the proof of Relation 9 we showed that $P_2$ folds into $Q_2$ in $\F_{3}$. As $ \F_{3}=\{+90^{\circ}\text{; interior faces; grid}\}\subset \F_{7}=\{+90^{\circ}, 180^{\circ}\text{; interior faces; grid}\},$ this holds true for $\F_{7}$.

{\bf Relation 17:} We show that $P_3$ folds into $Q_3$ in $\F_{6}$, but does not fold into $Q_3$ in~$\F_{7}$. $|P_3|=46$, and $Q_3$ has $46$ square faces on its surface. Hence, all squares of $P_3$ are faces of $Q_3$. In the proof of Relation 8 we showed that $P_3$ does not fold into $Q_3$ in $\F_{3}=\{\text{grid: }+90^{\circ}\text{; interior faces}\}$. As $\F_{7}=\{\text{grid: }+90^{\circ}, 180^{\circ}\text{; interior faces}\},$ we need to use $180^{\circ}$ folds to fold $P_3$ into $Q_3$. But any such fold would cover a face of $Q_3$ twice, a contradiction to all squares of $P_3$ being faces of $Q_3$.
In the proof of Relation 8 we showed that $P_3$ folds into $Q_3$ in $\F_{2}$. As $ \F_{2}=\{\pm 90^{\circ}\text{; grid}\}\subset \F_{6}=\{\pm 90^{\circ}, 180^{\circ}\text{; grid}\},$ this holds true for $\F_{6}$. 

{\bf Relation 18:} We show that $P_3$ folds into $Q_3$ in $\F_{5}$, but does not fold into $Q_3$ in $\F_{7}$ (see proof of Relation 17).  In the proof of Relation 8 we showed that $P_3$ folds into $Q_3$ in $\F_{2}$. As $ \F_{2}=\{\pm 90^{\circ}\text{; grid}\}\subset \F_{5}=\{\pm 90^{\circ}\text{; interior faces; grid}\},$ this holds true for $\F_{5}$.

{\bf Relation 19:} We show that $P_1$ folds into $Q_1$ in $\F_{7}$ (see proof of Relation 12), but does not fold into $Q_1$ in $\F_{5}$. In the proof of Relation 8 we showed that $P_1$ does not fold into $Q_1$ in $ \F_{2}=\{\pm 90^{\circ}\text{; grid}\}$. As $\F_{5}=\{\pm 90^{\circ}\text{; interior faces; grid}\}$ we need to use interior faces. As $Q_1$ is a unit cube, we need two folds of $+60^{\circ}$, an infeasible grid fold in $\F_{5}$. 

{\bf Relation 20:} $P_1$ folds into $Q_1$ in $\F_{6}$ (see proof of Relation 15), but does not fold into $Q_1$ in $\F_{5}$ (see proof of Relation 19).

{\bf Relation 21:} $P_2$ folds into $Q_2$ in $\F_{5}$ (see proof of Relation 10), but does not fold into $Q_2$ in $\F_{6}$ (see proof of Relation 16). 

{\bf Relation 22:} We show that $P_2$ folds into $Q_2$ in $\F_{8}$, but does not fold into $Q_2$ in $\F_{6}$ (see proof of Relation 16). In the proof of Relation 3 we showed that $P_2$ folds into $Q_2$ in $\F_{3}$. As $ \F_{3}=\{+90^{\circ}\text{; interior faces; grid}\}\subset \F_{8}=\{\pm 90^{\circ}, 180^{\circ}\text{; interior faces; grid}\},$ this holds true for $\F_{8}$.

{\bf Relation 23:} We show that $P_1$ folds into $Q_1$ in $\F_{8}$, but does not fold into $Q_1$ in $\F_{5}$ (see proof of Relation 19). In the proof of Relation 2 we showed that $P_1$ does not fold into $Q_1$ in $ \F_{4}$. As $ \F_{4}=\{+90^{\circ}, 180^{\circ}\text{; grid}\}\subset \F_{8}=\{\pm 90^{\circ}, 180^{\circ}\text{; interior faces; grid}\},$ this holds true for $\F_{8}$.

{\bf Relation 24:} We show that $P_3$ folds into $Q_3$ in $\F_{8}$, but does not fold into $Q_3$ in $\F_{7}$ (see proof of Relation 17). In the proof of Relation 1 we showed that $P_3$ folds into $Q_3$ in $\F_{2}$. As $ \F_{2}=\{\pm 90^{\circ}\text{; grid}\}\subset \F_{8}=\{\pm 90^{\circ}, 180^{\circ}\text{; interior faces; grid}\},$ this holds true for $\F_{8}$.

\end{proof}

All but one of the polyominoes used in the proof of Theorem~\ref{le:hier} (Figure~\ref{fig:hierex}(a)-(d)) are tree shapes, that is, polyominoes whose dual graph of unit squares is a tree. We conjecture the following:

\begin{conj}\label{co:hier-tree}
The hierarchy in Theorem~\ref{le:hier} also holds for tree shapes. 
\end{conj}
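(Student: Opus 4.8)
The plan is to reduce the conjecture to a single missing construction. Inspecting the proof of Theorem~\ref{le:hier}, the containments (black arrows) hold for every class of polyominoes, so only the separating examples require tree-shaped witnesses. Among the four examples of Figure~\ref{fig:hierex}, the shapes $P_1$, $P_2$, and $P_4$ (examples (a), (b), (d)) are already tree shapes, so every relation they establish---namely relations $2,3,4,6,7,9,10,12,13,15,16,19,20,21,22,23,25$---already holds for tree shapes. The only non-tree witness is $P_3$ (example (c)), which contains a solid $3\times 9$ block and establishes relations $1,5,8,11,14,17,18,24$. Hence it suffices to replace the pair $(P_3,Q_3)$ by a \emph{tree-shaped} analogue.

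Next I would observe that all eight of these relations follow from just two properties of the pair: (A) $P_3$ folds into $Q_3$ in the weakest mountain-and-valley model $\F_{2}=\{\pm 90^\circ;\text{grid}\}$, and (B) $P_3$ does \emph{not} fold into $Q_3$ in the strongest mountain-only model $\F_{7}=\{+90^\circ,180^\circ;\text{interior faces; grid}\}$. Indeed, (A) yields foldability in every model containing $\F_{2}$ (that is, $\F_{2},\F_{5},\F_{6},\F_{8}$), while (B) yields non-foldability in every model contained in $\F_{7}$ (that is, $\F_{1},\F_{3},\F_{4},\F_{7}$), and together these reproduce each of relations $1,5,8,11,14,17,18,24$ verbatim. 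So the entire conjecture reduces to exhibiting one tree-shaped polyomino $P_3'$ and a polycube $Q_3'$ satisfying (A) and (B).

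For the construction I would preserve the essential feature of $Q_3$: a concave pocket (the central one-cube hole, or more simply a reflex edge of an L-shaped polycube) whose two incident faces, when covered by a connected creased strip, force a $-90^\circ$ valley fold, since the paper must bend into the concavity. I would then choose $P_3'$ to be a tree whose number of squares equals exactly the number of surface faces of $Q_3'$, so that there is no slack: every square must cover a distinct face. This tight count immediately forbids both $180^\circ$ folds and interior faces, each of which would consume a square and leave some face of $Q_3'$ uncovered; this is precisely what upgrades (B) from ``no pure $+90^\circ$ folding'' to ``no folding in all of $\F_{7}$,'' exactly as in Lemma~\ref{le:b-int} and in the original argument for relation~17.

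The main obstacle, and the reason this remains a conjecture, is establishing (B) without the rigidity of a solid block. In Theorem~\ref{le:hier}, the solid $3\times 9$ subpolyomino of $P_3$ can occupy only two positions on $Q_3$, and this near-unique placement is what forces the offending square and rules out any mountain-only covering. A tree is far more flexible: cuts between adjacent squares are permitted, so the two faces meeting at the reflex edge could in principle be covered by different branches of the tree reached by mountain folds from different directions, avoiding the valley crease altogether. The crux is therefore to engineer $Q_3'$ with a pocket narrow or deep enough that it admits only one covering route up to symmetry, and to prove that this route genuinely forces a valley fold for \emph{every} tree embedding---a combinatorial case analysis over tree routings that must replace the single rigidity argument, and where the real difficulty lies.
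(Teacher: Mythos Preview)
The paper does not prove this statement: it is explicitly stated as an open \emph{conjecture}, immediately after the observation that all but one of the polyominoes in Figure~\ref{fig:hierex} are already tree shapes. There is therefore no paper proof to compare against.

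Your proposal is not a proof either, and you are aware of this: you explicitly write ``the reason this remains a conjecture.'' What you have produced is a correct and clean reduction of the conjecture to a single missing ingredient. Your identification of which relations are already witnessed by tree shapes ($P_1,P_2,P_4$) and which need a new witness (relations $1,5,8,11,14,17,18,24$, all from $P_3$) matches the paper exactly. Your further reduction---that a single pair $(P_3',Q_3')$ satisfying (A) foldability in $\F_2$ and (B) non-foldability in $\F_7$ would suffice for all eight relations simultaneously---is correct, since $\F_2\subseteq\F_5,\F_6,\F_8$ and $\F_1,\F_3,\F_4\subseteq\F_7$. The tight-count argument (choosing $|P_3'|$ equal to the number of faces of $Q_3'$ to rule out $180^\circ$ folds and interior faces for free) is exactly the mechanism the paper uses for $P_3$, and your diagnosis of the obstacle---that a tree's flexibility destroys the rigidity argument for the $3\times 9$ block---is the right one.

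In short: there is no gap relative to the paper, because the paper proves nothing here; and your reduction is sound, but it is a proof \emph{plan}, not a proof. If this were submitted as a proof of the conjecture it would be incomplete at precisely the point you yourself flag.
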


\begin{figure}[t!]
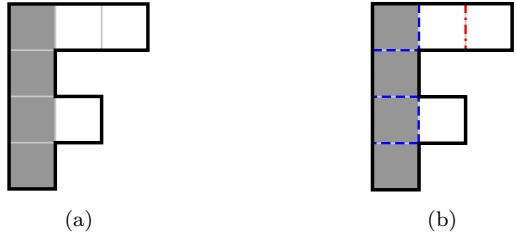

\hspace*{.23\textwidth}
    \comic{.15\textwidth}{mv-2.eps}{(a)}
\hfill
\comic{.15\textwidth}{mv-ii}{(b)}\hspace*{.23\textwidth}
        \caption{\label{fig:mv} (a) Shape $P$ that needs both mountain and valley folds to cover a unit cube $C$. (b) The mountain and valley folds are shown in red and blue, respectively.}
\end{figure} 

\begin{proposition}\label{le:mv}
There exist tree shapes $P$ that need both mountain and valley folds to cover a unit cube $C$.
\end{proposition}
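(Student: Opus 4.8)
The plan is constructive: exhibit the specific tree-shaped polyomino $P$ of Figure~\ref{fig:mv}(a), verify that it folds into the unit cube $C$ using the mountain--valley assignment drawn in Figure~\ref{fig:mv}(b) (so that a folding exists and this one genuinely uses both fold types), and then prove that \emph{every} folding of $P$ onto $C$ must use at least one mountain fold and at least one valley fold. The first part is a direct check: following the creases of Figure~\ref{fig:mv}(b) one confirms that the six faces of $C$ are each covered, with the indicated red ($+90^\circ$ mountain) and blue ($-90^\circ$ valley) creases. The substance is therefore the impossibility of a single-sign folding, and I would work in the grid model with $\pm 90^\circ$ folds, matching the figure.

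My main tool for this is the correspondence between all-mountain grid foldings and \emph{rolling the cube}. In the $\{\pm 90^\circ;\text{grid}\}$ model, whenever two edge-adjacent squares of $P$ are joined by a $+90^\circ$ mountain fold they land on two exterior faces of $C$ meeting convexly at the image of their shared edge; hence an all-mountain folding of $P$ is obtained exactly by placing $C$ on a root square of $P$ and rolling it, one face at a time, across each grid edge of the tree. Because $P$ is a tree, the path to each square is unique, so this rolling is well defined and, up to relabeling faces, independent of the starting square and orientation. Consequently $P$ folds into $C$ with mountain folds only if and only if rolling $C$ over $P$ makes all six faces touch the plane. For the explicit $P$ of Figure~\ref{fig:mv}(a) this is a finite verification: rolling the cube over the tree revisits faces in a pattern that leaves at least one face of $C$ untouched, so no all-mountain folding can cover $C$.

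It then remains to rule out an all-valley folding, and here I would exploit a mirror reduction rather than a fresh argument. Turning the sheet over exchanges every mountain fold with a valley fold and maps $P$ to its mirror image $\bar P$, while the target cube, being reflection-symmetric, is unchanged. Hence an all-valley folding of $P$ into $C$ is equivalent to an all-mountain folding of $\bar P$ into $C$, that is, to rolling $C$ over $\bar P$ covering all six faces. But rolling over $\bar P$ is the mirror image of rolling over $P$, and since the cube is achiral it touches exactly the mirrored set of faces---again missing one. Thus neither an all-mountain nor an all-valley folding of $P$ covers $C$, so every folding of $P$ onto $C$ must mix the two signs, which is the claim.

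I expect the main obstacle to be this impossibility half, specifically making the ``rolling misses a face'' verification airtight. One must check that no choice of root square and starting orientation---equivalently, no relabeling of the six faces---lets the rolling cover all of $C$, and one must confirm that the branches of $P$, which roll independently away from each branching square, cannot jointly reach the missing face. This is a bounded case analysis for the fixed small shape $P$, but it is where the care lies; the existence of the folding and the mirror reduction are then routine.
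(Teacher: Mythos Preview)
Your approach is correct and would establish the proposition, but it is more elaborate than the paper's and contains one factual slip about the figure.

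The paper's proof is essentially two lines and entirely ad hoc to this $P$: the $1\times 4$ left column, under valley-only folds, must wrap into a four-face ring, and then the two flaps---both attached on the same side of the column---can reach only one of the two remaining cap faces. The paper states this for all-valley and leaves the all-mountain case implicit (the identical argument, or equivalently turning the sheet over, handles it). Your rolling-the-cube framework and mirror reduction are correct and make that symmetry explicit, but the paper gets there without either device, because for this specific shape the ring-plus-flaps picture already decides the matter in one step.

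The slip: you assert that Figure~\ref{fig:mv}(b) shows only $\pm 90^\circ$ creases, but the paper's folding there explicitly uses a $180^\circ$ fold between the column and the longer flap. This creates a small mismatch you should resolve one way or the other. If you insist on the $\{\pm 90^\circ;\text{grid}\}$ model, you cannot cite the figure for existence and must supply your own mixed-sign witness (one exists: wrap the column into a ring with one sign, then send the two flaps to opposite caps with opposite $\pm 90^\circ$ creases). If instead you work in the paper's model, then ``all mountain'' includes $+180^\circ$, and your rolling argument---which treats only $+90^\circ$---needs the easy additional remark that a $+180^\circ$ crease merely identifies two adjacent squares and so cannot enlarge the set of cube faces the rolling reaches. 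Either repair is routine, but one of them is needed to close the argument.
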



\begin{proof}
The shape $P$ shown in Figure~\ref{fig:mv}(a) does not fold into a unit cube $C$ with only valley folds: the four unit squares in the left column (gray) fold to a ring of size four, and the two flaps can only cover one of the two remaining cube faces. But with both mountain and valley folds, $P$ can be folded into $C$, by using a $180^{\circ}$ fold between the first column and the longer flap, see Figure~\ref{fig:mv}(b).
\end{proof}

We used example (d) in Theorem~\ref{le:hier} to show that some polyominoes require the use of interior faces in order to fold to a given polycube. The following theorem shows that $Q_2$, with size $2$, is a minimal example of a polycube such that folding it with a tree shape may require interior faces.


\begin{theorem}\label{le:interior}
Let $\F = \{$any$^{\circ}$; interior faces; grid$\}$.
Any tree shape $P$ that folds into a unit cube with interior faces also folds into a unit cube without interior faces.
\end{theorem}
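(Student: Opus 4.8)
The plan is to show that interior faces can always be ``unfolded'' onto the surface one connected clump at a time, while never disturbing the squares that already do the covering.

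First I would record two elementary facts about grid foldings into a unit cube $C$ in the model $\F$. Since every fold is a grid fold and each square is a rigid unit square, a square that lies on the surface of $C$ cannot be tilted or shifted inside a face: it must coincide with a whole face of $C$. Consequently, whenever a surface square and a neighboring square share a fold edge, that edge is one of the twelve cube edges. I would then prove the key structural fact: \emph{an interior face has at most one surface neighbor}. Indeed, if an interior square $b$ had two surface neighbors, the two shared edges would be two distinct edges of the unit square $b$ lying on cube edges. Two edges of a unit square are either opposite (parallel, at distance $1$) or adjacent (perpendicular, meeting at a corner); in either case the corresponding pair of cube edges spans a single face of $C$, forcing $b$ to coincide with that face---contradicting that $b$ is interior.

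Next comes the reduction. Group the interior faces into maximal connected sets (\emph{interior components}) in the dual tree $T$ of $P$. By maximality every neighbor of a component $J$ is a surface face, and each attaching edge is a cube edge. Cutting these attaching edges splits $T$ into $J$ and a collection of outside subtrees, and---since $J$ is entirely interior---all six faces of $C$ are covered by surface squares living in those outside subtrees. I would argue it therefore suffices to prove the following re-folding claim: \emph{$J$ can be folded entirely onto the surface of $C$ so that each of its attaching edges stays at the cube edge it occupied in the given folding}. Granting this, the attaching surface faces, and hence their whole outside subtrees, may be kept exactly where they were, so the covering is untouched; meanwhile every square of $J$ now lies on the surface (multiple layers being allowed). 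This strictly decreases the number of interior faces, and iterating finishes the proof by induction, the base case being a folding with no interior faces.

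For the re-folding claim I would use a ``flatten and route'' idea. Fix an attaching edge $e_1$ and the face it borders, and place the interior square $b_1$ meeting $e_1$ flat on that face; now $b_1$ is a whole face, so all four of its edges are cube edges and each of its children is hinged on a cube edge. Propagating this down $J$, every square can be sent to an adjacent face by a $90^\circ$ fold or kept on the current face by a $180^\circ$ fold, so $J$ admits \emph{some} surface folding. The content is to make the choices so that each remaining attaching edge $e_2,\dots,e_k$ lands back on its prescribed cube edge; this is a reachability problem on the face-adjacency structure of the cube, where surplus squares along a path can be absorbed by pairs of $180^\circ$ folds, and where the mere existence of the original (interior) folding certifies that the combinatorial data and the number of available squares are consistent.

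\textbf{The main obstacle} I expect is exactly this last step: simultaneously steering \emph{all} of a component's attaching edges back to their target cube edges, together with the parity and counting bookkeeping that guarantees the available squares suffice to realize the needed surface routes. The ``$60^\circ$ tent'' configuration ($A\!-\!B_1\!-\!B_2\!-\!C$, with $B_1,B_2$ interior) is the instructive case: there the interior detour returns to the same cube edge, and it can be replaced by laying $B_1$ on an adjacent face and folding $B_2$ flat onto it, leaving $C$ and everything beyond it in place. Verifying that an analogous surface route always exists for longer and more branched interior components---using that a unit cube is too small for an interior chain to reach a cube edge that a surface chain of no greater length cannot---is where the real work lies.
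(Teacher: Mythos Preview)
Your structural observation that an interior square can have at most one surface neighbor is correct and clean. The plan breaks, however, at the re-folding claim: it is \emph{not} true that an interior component can always be laid on the surface while every attaching edge stays on its original cube edge, and the sentence ``the mere existence of the original (interior) folding certifies that the combinatorial data \dots\ are consistent'' is exactly what fails.

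A concrete obstruction already appears in the shortest nontrivial case, a straight path $s_1\!-\!b_1\!-\!b_2\!-\!s_2$ with $b_1,b_2$ interior. Take $E_1=\{(t,0,0):t\in[0,1]\}$ and $E_2=\{(t,0,1):t\in[0,1]\}$, the two $x$-parallel edges of the face $y=0$. An interior realization exists: tilt $b_1$ at $30^\circ$ from the bottom face so its far edge sits at $(y,z)=(\sqrt3/2,\,1/2)$, then tilt $b_2$ so its far edge lands at $(0,1)=E_2$; both squares are strictly inside the cube. But on the surface, if you place $b_1$ on either face through $E_1$ (namely $z=0$ or $y=0$) and then place $b_2$ on a face through the opposite edge, the far edge of $b_2$ can only be $E_1$ again or the diagonally opposite edge $\{(t,1,1)\}$---never $E_2$. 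So you cannot keep both outside subtrees pinned, and the induction on the number of interior faces stalls.

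The paper's argument accepts this and works around it in two different ways. For straight interior paths it fixes one end, wraps the path around the four side faces of the cube, and \emph{rotates} the far subtree about the cube's axis to reconnect; because the rotation is a cube symmetry, the already-covered top and bottom stay covered even though that subtree has moved. For interior paths that turn, a lengthy case analysis on the number and pattern of turns shows that, after a few $180^\circ$ folds, the path itself already contains a hexomino cube net, so \emph{everything} else can be folded flat onto it rather than preserved. Your flatten-and-route idea is close in spirit to the first maneuver, but the simultaneous-endpoint version you need is too strong; getting to a proof requires precisely the global rearrangements and case work you were hoping to avoid.
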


\begin{proof}
We use induction on the size of $P$. Therefore, consider a $P$ that folds
into a unit cube with interior faces  such that any smaller tree that
folds into a unit cube with interior faces also folds into a unit cube
without interior faces. If $P$ folds into a unit cube with interior
faces in multiple ways, fix one of them with as few interior faces as
possible; if the number of interior faces is 0, then the induction is
complete. In the rest of this proof, ``the folding of $P$'' refers to
that folding.

The proof progresses as follows: first, we argue that we can fold away extraneous branches of the folded tree shape $P$ interior to the cube, leaving only topologically necessary interior faces. Then, we show that the interior faces must be connected to form long ``interior paths''. Finally, by restricting the ways these interior paths can turn and connect to the cube faces, we show by case analysis that the interior faces can be refolded to lie on the cube faces.

{\bf Claim 1: } No interior face of the folding of $P$ is a leaf of the dual of $P$.

If the folding of $P$ has some interior face $T$ that is a leaf of the
tree, then the restriction of the folding of $P$ to $P \setminus T$ also
folds into a cube (possibly with interior faces), so it folds into a
cube without interior faces. But if we fold $T$ onto its parent in the
tree and identify the two faces, we get exactly the restriction of the
folding of~$P$ to~$P \setminus T$, so we can use that folding to fold
all of $P$ into a cube without interior faces.

Since no interior face is a leaf, every interior face is part of some
path joining two cube faces in the folding of $P$. Call such a path
(including the two cube faces at its ends) an ``interior path''. When
convenient, we may refer to the unfolded state of such a path by the
sequence of directions from each square to the next, using N(orth) for
positive y, S(outh) for negative y, W(est) for negative x, and E(ast)
for positive x; for instance, the path of the three squares at the
grid positions (0,0), (0,1), and (1,1) could be written NE or WS.

{\bf Claim 2: } Every interior path has length at least 4.

Any single square which joins to two cube edges is one of the cube
faces. So there are at least two interior faces in any such path, plus
one face at each end, for a total of at least 4.

{\bf Claim 3: } Some interior path has at least one turn in the unfolded state.

Suppose to the contrary that every interior path has no turns in the unfolded state; that is, that all such
paths are $1 \times k$ rectangles in the unfolded state. We will eliminate
such paths one by one. After eliminating each path, we will have a
folding of $P$ into a cube with possibly-intersecting interior faces, but
after all interior paths are eliminated, there can be no more interior
faces that could intersect, and we will have a proper folding of $P$
(defined by fold angles at the edges) into a
cube without interior faces. This folding can always be done without
interference between cube face squares. The interior path splits the
polyomino into two pieces, each of which can be folded independently of
each other just as they were in the original folding. So we are left
with two partial cubes, connected by the interior path. If we invert
every mountain and valley fold in the folding of one of them, we get a
folding with inverted orientation (that is, with the innermost layer
outermost and vice versa). Then we can nest the folding of one part
inside the folding of the other part.

If we have an extended interior path that is a rectangle consisting of squares
$T_0$, $T_1$, $T_2$, $T_3$, \ldots, $T_k$ (where $T_0$ and $T_k$ are
cube faces), note that all the edges between squares on
the path are parallel, and the edge between $T_0$ and $T_1$ is a cube
edge, so all of them are parallel to the same cube edge. Without loss of
generality, let that cube edge be vertical. We will leave every square in the component of $P \setminus
T_1$ containing~$T_0$ in place, unfold the path so that it goes around
the faces of the cube other than the top and bottom (and, since it has
at least 4 faces by Claim 2, it covers all four such faces), and rotate
the rest of the unfolding of $P$ around the vertical axis of the cube so
that it still connects to $T_k$. The top and bottom faces are still
covered because they were covered in the folding of $P$, and every
existing cube face has either been left in place or rotated around the
vertical axis of the cube, both of which take the top face to the top
face and the bottom face to the bottom face.

After eliminating all such paths that way, we get a folding of $P$ into
a cube with no interior faces, as desired.

{\bf Claim 4:} In every interior path, there is at least one interior
square between each cube face to which the path connects and the first
square at which the path turns.

Suppose there is an interior path that turns at its first interior
square $T_1$. Let the orientation of the cube edge to which $T_1$
connects be $x$, let the orientation of the edge on which $T_1$ connects
to the next square $T_2$ of the path be $y$, and let the orientation of
the other edge of $T_2$ be $z$. $T_2$ has one vertex at a vertex $v$ of the
cube, since it shares an edge of $T_1$ adjacent (as an edge of $T_1$) to
the edge on which $T_1$ connects to a cube face. Without loss of
generality let the cube be axis-aligned and let $v$ be the vertex with
the least value in every coordinate. Then any edge inside the cube with
an endpoint at $v$ goes in a nonnegative direction in every
coordinate. In particular, $y$ and $z$ are two such directions, and
they are perpendicular, so their dot product is 0, so in every coordinate
at least one of them is 0. Since there are only three coordinates, one
of them is nonzero in at most one coordinate, that is, one of $y$ and
$z$ is the orientation of a cube edge. If $y$ is the orientation of a
cube edge, then $T_1$ has sides of orientations $x$ and $y$ and is
therefore a cube face, contradicting the definition of $T_1$. If $z$ is
the orientation of a cube edge and isn't $x$, then $y$ is perpendicular
to two cube edge orientations $x$ and $z$ and is therefore a cube edge
orientation. Finally, if $z=x$, then $T_1$ and $T_2$ are in the same
place in the folding, so if in the unfolded state we first fold $T_1$
onto $T_2$, we get a tree which folds into a cube with interior faces
and which therefore by induction folds into a cube without interior faces.

{\bf Claim 5:} If a square of an interior path is parallel to one of the
cube faces, then there are at least two interior squares between it and
any cube face to which the path connects.

Let $T$ be an interior square parallel to a cube face, and suppose that
exactly one interior face connects an edge $e$ of it to a cube edge
$e'$. Then $e'$ would either be a vertical translation of $e$ by a
distance strictly between 0 and 1 or a horizontal translation of such by
exactly 1, and in both cases the distance from $e$ to $e'$ is not
exactly 1. 

{\bf Claim 6:} No interior path with exactly one turn connects to
cube faces at edges of the same orientation.

Suppose to the contrary there is an interior path with exactly one turn that connects to
edges of the same orientation. Let $T$ be the square at which that path
turns. $T$ divides the path into two $1 \times k$ rectangles whose
intersection is $T$. In each such rectangle, the two edges of length 1
are parallel, since in each square of the rectangle, the two opposite
edges are parallel. Since $T$ connects to edges of the same orientation,
all four edges of length 1 in those rectangles are parallel; in
particular, two consecutive edges of $T$ are parallel. But they are
consecutive edges of the square, so they are perpendicular, a contradiction.

{\bf Claim 7:} No interior path with exactly one turn connects to cube
faces at edges of different orientations. 

Suppose to the contrary that such a path exists.  Let $T$
be the square at which the path turns. Consider the orientations of the
internal (folded) edges of that path: the first one is parallel to one
cube edge, and every folded edge before the turn is parallel to it, and
the last one is parallel to a differently oriented cube edge, and every
folded edge after the turn is parallel to it, so the two folded edges at
the turn are each parallel to one of the cube edges, so $T$ is parallel
to a cube face, without loss of generality the top one. Since $T$ is
contained in the cube, it is a vertical translation of the top face.

There are at least two interior faces between $T$ and each end of the path $R$ by
Claim~5. Then the interior path and the two cube faces to which it
connects form an $L$ shape with side length 4 squares. In this case we could
fold away every other square of $P$ (starting with the leaves) and
ignore them; an $L$ of side length 4 already folds into a cube.

{\bf Claim 8:} No interior path with exactly two turns connects to cube
faces at edges of different orientations.

Suppose to the contrary that there is an interior path with exactly two turns that connects to
edges of different orientations $x$ and $z$, and let the orientations of
the folded edges in the path be $x$, $y$, and $z$. Then $y$ is
perpendicular to both $x$ and $z$ (since the first turn square $T_i$ has
edges oriented $x$ and $y$ and the other $T_j$ has edges oriented $y$ and
$z$), so $y$ is also an orientation of a cube edge. Then $T_i$ is
interior to the cube and parallel to the $xy$ cube faces and $T_j$ is
interior to the cube and parallel to the $yz$ cube faces, so they
intersect, contradiction.

{\bf Claim 9:} No interior path with exactly three turns connects to
cube faces at edges of the same orientation.

Suppose to the contrary that such a path exists. Let the orientations of the folded edges be $x$, $y$, $z$,
and $x$ again. Then $x$ and $y$ are perpendicular, $y$ and $z$ are
perpendicular, and $z$ and $x$ are perpendicular, so the square with
edges oriented $y$ and $z$ is perpendicular to one cube edge (and hence
parallel to a cube face) and contained inside the cube, so it is a
translation of a cube face, that is, $y$ and $z$ are the other two
orientations of cube edges. Then the square at which the path first
turns has edges $x$ and $y$, so it is a translation of the $xy$ cube
face, and the square at which the path turns second has edges $y$ and
$z$, so it is a translation of the $yz$ cube face. Both of those squares
are inside the cube, so they intersect, a contradiction.

{\bf Claim 10:} No interior path with exactly three turns connects to
cube faces of different orientations.

Suppose to the contrary that such a path exists. Let the orientations of the folded edges be $x$, $y$, $z$,
and $w$. Then $x$ and $y$ are perpendicular, $y$ and $z$ are
perpendicular, $z$ and $w$ are perpendicular, and $w$ and $x$ are
perpendicular (since they are two distinct cube edge orientations), so
each of $x$ and $z$ is perpendicular to each of $y$ and $w$. But we're
working in 3 dimensions, so the span of $x$ and $z$ and the span of $y$
and $w$ cannot both be two-dimensional, that is, either $x=z$ or $y=w$.
Without loss of generality $y=w$, and the orientations of the folded
edges of the path are $x$, $y$, $z$, and $y$. By Claim 5, there are at
least 4 squares in a line up to the square spanned by $x$ and $y$, and
by Claim 4, there are at least three squares in a line starting from the
last square spanned by $z$ and $y$. That is, the unfolded state
contains, after folding away any extra squares in those lines, either
NNNENWW or NNNENEE or NNNESEE or NNNESWW, up to rotation and reflection.
One fold takes the middle two to NNNEEE, which folds into a cube; one
fold takes the other two to a line of four with one square on either
side, which folds into a cube.

{\bf Claim 11:} No interior path has an even number of turns that is greater than or equal to~$4$.

Suppose to the contrary that such a path exists. By Claim 4, there are at least three squares in a line up
to and including the first square at which there is a turn and three
squares in a line from the last square at which there is a turn. Also,
there are at least two squares on the path between those two lines of
length 3, since there are at least two more squares at which the path
turns. Let the lines of length~3 be oriented east-west in the unfolded
state. If they are separated by at least two spaces north-south, then we
can fold over any north-south folds (between squares adjacent east or
west) to get a north-south line of length 4 with two lines extending by
two to the east or west; by folding one of those over if necessary, we
get a line of length 4 with a square on either side, which folds into a
cube. If there are two lines of length 3 that cover at least 5 spaces
east-west, then by folding east-west folds in the middle we get EENEE,
which folds into a cube. Otherwise, we have two lines of length 3
separated by at most one space north-south and covering at most 4 spaces
east-west. By folding any east-west folds north of the north line or
south of the south line, we get, up to rotation or reflection, either
EESWSEE or EESWSWW or EESWWSEE(E). The first two fold into WSWSW, which
folds into a cube; the last folds into WSWWS, which folds into a cube.

{\bf Claim 12:} No interior path has an odd number of turns that is greater than or equal to~$5$.

Suppose to the contrary that such a path exists. First, if there are any squares in the interior of the path
at which turns don't occur, fold them over. Without loss of generality,
let the interior path start at (0,0) and go EES to (1,0), (2,0),
(2,-1). If the path ever extends two spaces farther E in the unfolded
state, folding over all north-south folds past that EES that gives at
least EESEE, which folds into a cube. If the path ever again extends
farther W than it started in the unfolded state, then folding over all
north-south folds past that EES gives at least EESWWW, which folds into
EESEE, which folds into a cube. So, if the unfolded state ever extends
farther N than 1, it does so via the spaces (3,-1) and (3,0). Also, if
the unfolded state extends as far N as 2, then all three spaces (3,-1),
(3,0), (3,1) are used and in a straight line, so by folding all
east-west folds after (3,-1) and all north-south folds before it, we get
EEENNN, which folds into a cube. If the unfolded state extends farther N
than 0, then, since the unfolded state ends with a N-S line of three
squares (since there are an odd number of turns), can't go back below
N=0 once it is gone above it, and doesn't have room for a N-S line of
length three ending anywhere but (3,1), the unfolded state ends at
(3,1). Since the path has length at least 9, (2,-1) doesn't connect
directly to (3,-1), so all four spaces (3,-2), (3,-1), (3,0), and (3,1)
are used and in a straight line, and by folding all east-west folds
after (3,-1) and all north-south folds before it, we get EEENNN, which
folds into a cube. If the path extends both as far E as 3 and as far S
as -3, then by folding anyway anything W of 2 by north-south folds, we
get (E)ESSSE, which folds into a cube. If the path extends as far E as
3, then it fits in a $3 \times 4$ bounding box and has at least 5 turns,
and the only two remaining paths are EESWWSEEENN and EESWSEENN. The
first can be folded into the second, which can be folded into EESSENN,
which can be folded into EESSES, which can be folded into ESSES, which
folds into a cube. In the only case left, the unfolded state is only
three squares wide east-west. If any of the other cases apply starting
from the other end of the path, we also fold into a cube; otherwise, the
unfolded state fits in a $3 \times 3$ square, but it is at least 9 long
and starts and ends with perpendicular paths of length 3, which don't
fit in a $3 \times 3$ square, contradiction. Hence any such path can be
folded into a cube.

{\bf Claim 13:} No interior path with exactly two turns connects to cube
faces at edges of the same orientation.

If there is such a path $R$, say it connects to cube faces at edges of
the same orientation $x$. Let the orientations of the folded edges
be x, y, and x, let $T_i$ and $T_j$ be the squares at which the turns
occur (whose edges are oriented $x$ and $y$), and let $T_0$ and $T_k$
be the cube faces at the ends of the paths. Then $2 \le i < j \le k-2$
by Claim 4. Without loss of generality, let the first turn in
the unfolded state be a right turn.

If the second turn is a left turn,
then fold away everything but the path, and fold away the portion of the
path between $T_i$ and $T_j$ to get a path of length 6 that goes NNENN,
which folds into a cube.

If both turns are right turns and either $i > 3$ or $j \le k-3$, then we
can fold away the portion of the path between $T_i$ and $T_j$ and fold
the other edge adjacent to either $T_i$ or $T_j$ to get a path of length
6 that goes NNENN, which folds into a cube.

If both turns are right turns and $j - i \ge 3$, then by folding
everything else away and folding the edge between $T_i$ and $T_i-1$ we
get a path that goes SEEES, which folds into a cube.

If both turns are right turns, $j - i = 1$, and either end of the path
(without loss of generality the $T_k$ end of the path) connects to more than one square
(besides $T_0$ or $T_k$), then the path contains either N, N, E, S*, S,
S or N, N*, E, S, S, E, and making
180$^{\circ}$ folds over the starred edges gives a shape that folds into
a cube.

If both turns are right turns, $j-i = 1$, and neither end of the path
connects to more than one square in the folded state, then there are
more squares in the unfolded state than the six squares of the path $T_0 = (0,0)$, $T_1 =
(0,1)$, $T_2 = (0,2)$, $T_3 = (1,2)$, $T_4 = (1,1)$, $T_5 = (1,0)$,
because those only cover 2 cube faces. By the previous paragraph, the
rest of the unfolding doesn't connect at $T_0$ or $T_5$, and by Claim 4,
nothing connects at $T_1$ or $T_4$. Suppose some path $R$ leading to
another cube face connects to $(0,2)$. If the next square on $R$ is
$(0,3)$, then the path $T_5$, $T_4$, $T_3$, $T_2$, $R$ is an interior
path with at least two turns and its first two turns in different
directions, but we've already eliminated all such cases. Otherwise, the
next square of $R$ is $(-1,2)$. If the following square is $(-1,3)$,
then the path $T_5$, $T_4$, $T_3$, $T_2$, $R$ is an interior path with
at least two turns and its first two turns in different directions, but
we've already eliminated all such cases. If the following square is
$(-2,-2)$, then the path $T_5$, $T_4$, $T_3$, $T_2$, $R$ is an interior
path with at least one turn and at least three spaces in a line after
its first turn, but we've already eliminated all such cases. So the
following square is $(-1,1)$. That accounts for both turns on the path,
so the next square is $(-1,0)$. If that is not a cube face, then the path
$T_5$, $T_4$, $T_3$, $T_2$, $R$ has two right turns and either $i > 3$
or $j \le k-3$, an already-eliminated case. Similarly, if there is more
than one cube face there, then the path $T_0$, $T_1$, $T_2$, $R$ is a
path with two right turns,  $j - i = 1$, and an end of the path
connecting to more than one square, an excluded case. So we've only accounted
for three cube faces and need at least three more, but there is nowhere
left that a path to a cube face can attach, contradiction.

If both turns are right turns, $j - i = 2$, and both ends of the path
connect to more than one square (besides $T_0$ or $T_k$): in the
unfolded state, $T_0$ and $T_k$ have exactly one square between them,
and that square isn't the extra square for both ends of the path because
the unfolded state is a tree, so some end of the path (without loss of generality the $T_k$
end of the path) connects to a different other square, that is, the path
contains either N, N, E, E*, S*, S, S or N, N*, E*, E, S, S, E, and making 180$^{\circ}$ folds over
the starred edges gives a shape that folds into a cube.

If both turns are right turns, $j-i = 2$, and one end of the path
(without loss of generality the $T_k$ end) connects to no cube faces
except its end ($T_k$) in the folded state, then nothing else connects
to the interior of the path, because we've already eliminated all cases
for interior paths except this one and straight interior paths, but
there is nowhere we can connect an interior path to this one without
creating at least one non-straight interior path. We will refold the
squares of the interior path to leave $T_0$ fixed and nevertheless cover
the cube face that $T_k$ covered; this doesn't affect any other cube
faces. In fact, starting from $T_0$, a path N, N, E, E, S, S along the cube covers four of its faces, and folding the first
E and the first S 180$^{\circ}$ to get a path N, N, W, N covers the remaining two. So we can cover the whole cube without that
interior path, as desired.
\end{proof}


\section{Polyominoes that fold into a cube}\label{sec:char}

In this section we characterize all polyominoes that can be folded into a unit cube and all tree-shaped polyominoes that are a subset of a 2$\times n$ or 3$\times n$ strip and can be folded into a unit cube using arbitrary grid folds.
In Section~\ref{sec:const} we present exhaustive results for all polyominoes of constant size obtained by computer search. 

\begin{theorem}\label{le:w45}
Consider a polyomino $P$ of size $|S|=n$ and a unit cube $C$ under a folding model $\F = \{$any$^{\circ}$; grid; diagonal; half-grid$\}$, 
such that each face of $C$ has to be covered by a full unit square of $P$.
Then 
there is a polyomino of size $n=9$ that \emph{cannot} be folded into $C$, 
while all polyominoes with $n\geq 10$ \emph{can} be folded. 
\end{theorem}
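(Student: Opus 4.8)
The plan is to prove the two halves separately: first exhibit a single $9$-square polyomino that cannot be folded, and then show that \emph{every} polyomino with $n\ge 10$ folds.

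For the negative half I would take the polyomino $P$ of Figure~\ref{fig:lb9}, which has nine squares. The key reduction is that, because each face of $C$ must be covered by a \emph{full} unit square, diagonal and half-grid folds can never manufacture a cube face: they only subdivide or fold away material, so any square cut by such a fold is no longer a full unit square. Hence the six faces of $C$ must be realized by six of the nine original, \emph{uncut} unit squares, each placed flat, one per face. A valid folding is therefore equivalent to choosing six of the nine squares together with a geometrically consistent assignment of them to the faces of $C$ (chosen squares joined by a folded connecting path must meet correctly across cube edges). I would then run the (short) case analysis over these assignments, organized by which square plays a chosen corner of the induced net and how the placements of the remaining squares are forced, and show that in every case two squares are driven onto the same face or some face is left uncovered. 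Figure~\ref{fig:lb9} already shows that $P$ \emph{does} fold once the full-square requirement is dropped, which isolates the obstruction as precisely this constraint.

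For the positive half I would reduce to a bounded base case by a peeling argument. If $s$ is a leaf of the connection graph of $P$, fold $s$ by $180^\circ$ flat onto its single parent along their shared grid edge; since the folding of $P\setminus s$ uses no interior faces, the parent lies on the surface of $C$, so $s$ also lands on the surface and covers nothing new. Thus foldability is preserved under attaching a leaf, and by induction on $n$ it suffices to fold every polyomino of size exactly $10$; we cannot peel below $10$, which is exactly why $9$ is the threshold. For the base case I would split $10$-ominoes into two families: those containing a straight strip of at least seven squares, which fold by the $1\times 7$ construction of Figure~\ref{fig:1x7} (using diagonal folds as mirrors to route the strip around all faces of $C$) with the remaining squares peeled; and the remaining \emph{compact} shapes, for which I would argue that ten squares with no long straight strip must contain a cube-net subpolyomino (for instance the $1$-$4$-$1$ ``plus'' net), again folding the net and peeling the rest.

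The main obstacle is the base case: showing that \emph{every} $10$-omino folds, not merely some convenient ones. This needs a clean structural dichotomy (long strip versus compact), a guarantee that the compact case always exposes a foldable core, and, in both families, a guarantee that the non-core squares can always be folded flat without being forced into the interior of $C$ or off the six designated faces. The delicate point inside this is the treatment of polyominoes whose connection graph has cycles, where peeling a single leaf is not available and a cell may be hinged to several fixed neighbours at once; these must be handled by first folding redundant incident edges flat, and verifying consistency will likely require a careful, possibly computer-assisted, enumeration of the compact shapes. By contrast, the $9$-square case analysis is comparatively routine once the full-square reduction to ``choose six of the nine uncut squares'' is in place.
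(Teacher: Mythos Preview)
Your lower-bound half is fine and matches the paper: the example is the nine-square polyomino of Figure~\ref{fig:lb9}, and the reduction ``diagonal and half-grid folds cannot manufacture a full unit square'' is exactly the point.

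The upper-bound half, however, has two genuine gaps.

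\textbf{The induction does not cover cyclic polyominoes.} Your peeling step requires a leaf in the dual, but many polyominoes of size $\ge 11$ have none: any $a\times b$ rectangle with $a,b\ge 2$ is an example, as is any polyomino whose dual is $2$-edge-connected. You note this yourself at the end, but ``folding redundant incident edges flat'' is not an operation on polyominoes, and deferring to a computer-assisted enumeration is not a proof. The paper avoids this issue entirely: it never inducts on $n$, but instead uses global \emph{width/height transformations} (fold an entire row or column onto its neighbour) and \emph{connectivity transformations} (half-grid folds that slide pieces within a column) that apply regardless of whether the dual is a tree.

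\textbf{The base-case dichotomy is false.} You claim that a $10$-omino with no $1\times 7$ strip must contain a hexomino cube net. The $2\times 5$ rectangle is a counterexample: it has no straight strip of length~$7$, and every one of the eleven cube nets has bounding box of width at least~$3$, so none embeds in a $2\times 5$ grid. The paper handles exactly this case (and the surrounding $X=2$ family) by using half-grid connectivity transformations together with diagonal \emph{corner transformations} to manufacture a $1{+}4{+}1$ target shape that is not literally a subpolyomino of~$P$; see Figure~\ref{fig:ub10a-i}. More generally, the paper's strategy is to identify three small \emph{target} shapes ($1{+}4{+}1$, $2{-}2{-}2$, $2{-}3{+}1$) that fold to~$C$, and then do a case analysis on the bounding-box width $X\in\{1,2,3,\ge 4\}$, in each case applying the transformations above to reduce~$P$ to one of the targets. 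The half-grid and diagonal folds in the model are used essentially here; your proposal never invokes them on the positive side, which is why the $2\times 5$ case slips through.
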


\begin{figure}[t]
\centering
\subfloat[]{\label{fig:ub10a-a}
\includegraphics[page=1]{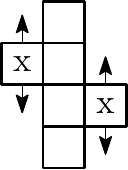}}
\hfil
\subfloat[]{\label{fig:ub10a-b}
\includegraphics[page=2]{ub10-new}}
\hfil
\subfloat[]{\label{fig:ub10a-c}
\includegraphics[page=3]{ub10-new}}
\hfil
\subfloat[connectivity transformation]{\label{fig:ub10a-d}
\includegraphics[page=4]{ub10-new}}
\\
\subfloat[number transformation]{\label{fig:ub10a-e}
\quad\includegraphics[page=5]{ub10-new}\quad}
\hfil
\subfloat[corner transformation]{\label{fig:ub10a-f}
\qquad\includegraphics[page=6]{ub10-new}\qquad}
\hfil
\subfloat[width/height transformation]{\label{fig:ub10a-g}
\includegraphics[page=7]{ub10-new}}
\caption{Proof details for Theorem~\ref{le:w45}. The mountain and valley folds are shown in red and blue, respectively. Unit squares marked with ``x'' can be located at an arbitrary height and light gray squares fold away. For the two adjacent corner transformations in the left of~\protect\subref{fig:ub10a-f}, fold away the upper shaded corner before folding away the lower one.}
\label{fig:ub10a}
\end{figure}


\begin{proof}
For the lower bound see Figure~\ref{fig:lb9}. The polyomino in Figure~\ref{fig:lb9-a} cannot be folded into a unit cube under our model. Note that if we allow half-grid folds without requiring faces of $C$ to be covered by full unit squares of $P$, 
we \emph{can} turn this shape $P$ into a unit cube; see Figure~\ref{fig:lb9-b}.

For the upper bound $n\geq 10$, we start
by identifying several \emph{target polyominoes}, examples of which are shown in
Figure~\ref{fig:ub10a-a}--\subref{fig:ub10a-c}.
Each target polyomino can be folded into a cube using only grid folds. We denote them by listing the numbers of squares in each column, using + or -- to signify the cases where the columns may be attached in arbitrary arrangements or a specific arrangement is required, respectively. The target polyominoes are:
\begin{enumerate}
\item[(a)] A \emph{1+4+1 polyomino}, composed of one contiguous column of four unit squares, with one more unit square on either side, at an arbitrary height.
\item[(b)] A \emph{2--2--2 polyomino}, composed of three (vertical) pairs attached in the specific manner shown.
\item[(c)] A \emph{2--3+1 polyomino}, composed of a (vertical) pair and triple attached in the specific manner shown, with one more unit square at an arbitrary height.
\end{enumerate}

\begin{figure}[t]
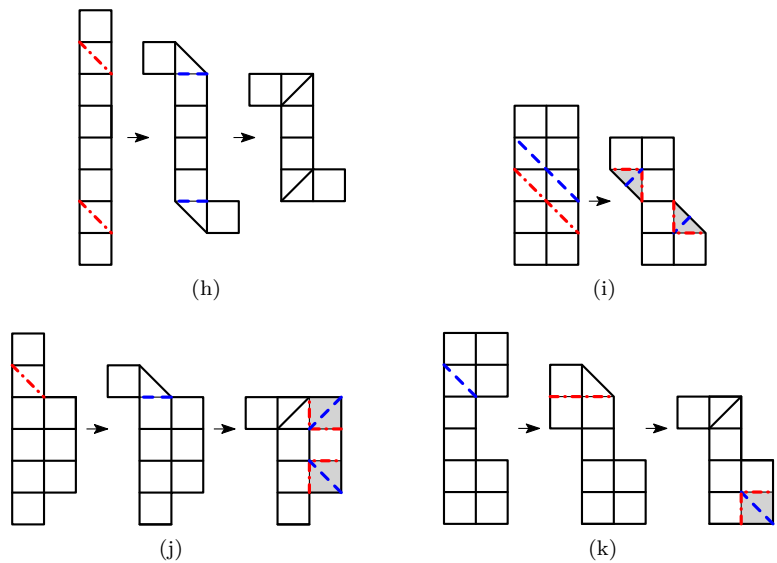

\centering
\setcounter{subfigure}{7}
\subfloat[]{\label{fig:ub10a-h}
\includegraphics[page=8]{ub10-new}}
\hfil
\subfloat[]{\label{fig:ub10a-i}
\includegraphics[page=9]{ub10-new}}
\\
\subfloat[]{\label{fig:ub10a-j}
\includegraphics[page=10]{ub10-new}}
\hfil
\subfloat[]{\label{fig:ub10a-k}
\includegraphics[page=11]{ub10-new}}
\caption{Proof details for Theorem~\ref{le:w45}. The mountain and valley folds are shown in red and blue, respectively. Regions shaded in light gray are folded away.}
\label{fig:ub10a2}
\end{figure}

We next describe some transformations that can be used to convert any polyomino of sufficient size to one of the target polyominoes. See Figure~\ref{fig:ub10a-d}--\subref{fig:ub10a-g} for examples of the following.
If $n_i$ is the maximum number of unit squares in any column (say,
in $i$), we can apply a \emph{connectivity transformation} \subref{fig:ub10a-d} by using 
(horizontal) half-grid folds to convert $P$ into a
polyomino $P'$ in which these $n_i$ unit squares form a contiguous
set, while leaving at least one unit square in each previously occupied column.
A \emph{number transformation} \subref{fig:ub10a-e} lets us fold away extra unit squares for $n>10$. 
\emph{Corner transformations} \subref{fig:ub10a-f} fold away unnecessary unit squares (or
half-squares) in target shapes by using diagonal folds when turning them into a
unit cube. Finally, \emph{width} and \emph{height transformations} \subref{fig:ub10a-g} 
fold over whole columns or rows of $P$ onto each other, producing a connected polyomino
with a smaller total number of columns or rows.

We consider a bounding box of $P$ of size $X\times Y$, with
$X\leq Y$. The unit squares are arranged in \emph{columns} and
\emph{rows}, indexed $1,\ldots, X$ and $1,\ldots, Y$, 
with $n_i$ unit squares~$P$ in column~$i$, and $m_j$ 
unit squares in row $j$.

Now consider a case distinction over $X$; see Figure~\ref{fig:ub10a-h}--Figure~\ref{fig:ub10b-t}.
For $X=1$, the claim is obvious, as we can reduce $P$ to a 1+4+1 target; see, for example, Figure~\ref{fig:ub10a-h}. 
For $X=2$, note that $Y\geq5$ and assume that $n_1\geq n_2$. If $n_1\geq8$, 
a width transformation yields the case $X=1$, so 
assume $n_1\leq 7$, and therefore $n_2\geq3$. By a number transformation, we
can assume $n_2\leq5$. If $P$ is a $2\times 5$ polyomino, we can make use of a 
1+4+1 polyomino with corner transformations; see Figure~\ref{fig:ub10a-i}. 
If~$n_1>n_2$, we have $n_1\geq 6$.
Because $P$ is connected, any two units squares in
column~1 must be connected via column~2, requiring at least three unit squares;
because~$n_2\leq 4$, we conclude that column~1 contains at most two connected
components of unit squares. Thus, at most one connectivity transformation makes column~1 connected, with
$n'_2\in \{n_2-1, n_2\}$ unit squares in column~2. Possibly using height
transformation, we get a connected polyomino $P''$ with vertical size six, six
unit squares in column~1, and $n''_2\in\{1,2,3,4\}$ unit squares in column~2. For
$n''_2\in\{1,2, 3\}$, there is a transformation to target shape
1+4+1 even if the squares in column~2 form a single connected component; for example see Figure~\ref{fig:ub10a-j}. For $n''_2=4$, a similar transformation 
exists; see, for example, Figure~\ref{fig:ub10a-k}. This leaves $n_1=5$, and thus $n_2=5$,
without $P$ being a $2\times 5$ polyomino. 
This  maps to a 1+4+1 target shape,
as shown for example in Figure~\ref{fig:ub10b-l} by folding a unit square from column
2 that extends beyond the vertical range of the unit squares in column~1 over to 
column~0.
%
\begin{figure}
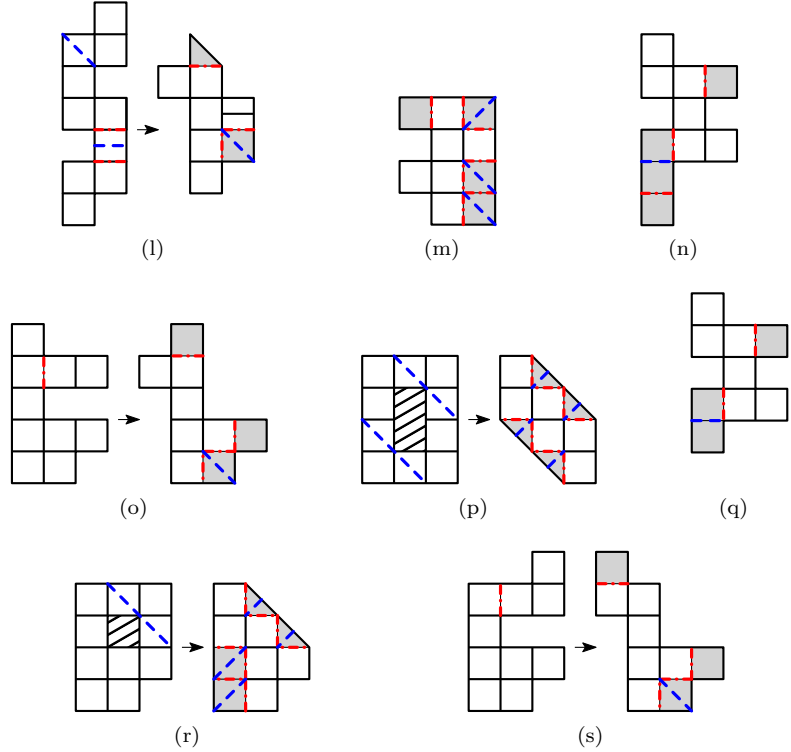

\centering
\setcounter{subfigure}{11}
\subfloat[]{\label{fig:ub10b-l}
\includegraphics[page=12]{ub10-new}}
\hfil
\subfloat[]{\label{fig:ub10b-m}
\includegraphics[page=13]{ub10-new}}
\hfil
\subfloat[]{\label{fig:ub10b-o}
\includegraphics[page=14]{ub10-new}}
\\
\subfloat[]{\label{fig:ub10b-p}
\includegraphics[page=15]{ub10-new}}
\hfil
\subfloat[]{\label{fig:ub10b-q}
\includegraphics[page=16]{ub10-new}}
\hfil
\subfloat[]{\label{fig:ub10b-r}
\includegraphics[page=17]{ub10-new}}
\\
\subfloat[]{\label{fig:ub10b-s}
\includegraphics[page=18]{ub10-new}}
\hfil
\subfloat[]{\label{fig:ub10b-t}
\includegraphics[page=19]{ub10-new}}
\caption{Proof details for Theorem~\ref{le:w45}. The mountain and valley folds are shown in red and blue, respectively. Regions shaded in light gray are folded away and the hatched region denote holes.}
\label{fig:ub10b}
\end{figure}

For $X=3$ and $n_2\geq4$, we can use 
connectivity, height and width transformations to obtain a new polyomino $P'$ that
has height four, a connected set of $n'_2=4$ unit squares in column~2 and
$1\leq n'_1 \leq 4$, as well as $1\leq n'_3 \leq 4$ unit squares in columns 1
and 3. Note that at least one square must remain in each of column~1 and column~3 throughout these transformations.
This easily converts to a 1+4+1 shape, possibly with corner transformations;
as in Figure~\ref{fig:ub10b-m} for example. Therefore, assume~$n_2\leq 3$ and (without loss of generality) $n_1\geq n_3$, implying
$n_1\geq 4$. 
If $n_1\geq 5$ and column~2 is connected, then $P$ contains a 2--3+1 target shape,
see Figure~\ref{fig:ub10b-o} for an example; if column~2 is disconnected, we can use a
vertical fold to flip one unit square from column~3 to column~0, obtaining a 1+4+1 target shape,
similar to Figure~\ref{fig:ub10b-p}. 
If $n_1 = 4$, $n_2 = 1$, and $n_3 = 5$ then a diagonal fold of the square in column~2, along with a corner transformation maps to a 1+4+1 target shape. As a consequence, we are left with $n_1=4$, $2\leq n_2 \leq 3$, $3\leq n_3 \leq
4$, where $n_2+n_3=6$.
If $n_2=2$, the unit squares
in columns 1 and 3 must be connected. For this it is straightforward to check
that we can convert $P$ into a 2--2--2 target polyomino; see Figure~\ref{fig:ub10b-q} for example. 
Therefore, consider $n_2=3$, $n_3=3$. This implies that column~1 contains at most two
connected sets of unit squares. If there are two, then the unit squares in
column~2 must be connected, implying that we can convert $P$ into a 2--3+1
target shape; as in Figure~\ref{fig:ub10b-r} for example. 
Thus, the four unit squares in column~1 must be connected. If the
three unit squares in column~2 are connected, we get a 2--3+1 target shape; see
for example Figure~\ref{fig:ub10b-r}. If
the unit squares in column~2 are disconnected, but connected by the three unit
squares in column~3, we convert this to a 2--3+1 target shape; see
Figure~\ref{fig:ub10b-s}. This leaves the
scenario in which there is a single unit square in column~1 whose removal
disconnects the shape; for this we can flip one unit square from column~3 to column~0 in order
to create a 1+4+1 target shape; see Figure~\ref{fig:ub10b-t} for an example.

For $X\geq 4$, we proceed along similar lines. If there is a row or column
that contains four unit squares, we can create a 1+4+1; otherwise,
a row or column with three unit squares allows generating a 2--3+1. If there is no such
row or column, we immediately get a 2--2--2. 
\end{proof}


In order to classify which tree shapes $P$ fold to a unit cube, we define two infinite families of polyominoes. First, the family given by a $1 \times n$ strip, with any number of added tabs consisting of squares adjacent to the same side of the strip and potentially including their left and/or right neighbors (see Figure~\ref{fig:2tree}). The second family consists of a $1 \times n$ strip with a single tab of height $2$, also potentially with their left and/or right neighbors (Figure~\ref{fig:3tree}).

\begin{theorem}\label{le:wo45}Given a tree shape $P$, 
a unit cube $C$ and $\F = \{$any$^{\circ}$; grid$\}$.
\begin{itemize}
\item[(a)] If $P$ is a subset of a $2\times n$ strip, then only the infinite family defined by Figure~\ref{fig:2tree} cannot fold into $C$.
\item[(b)] If $P$ is a subset of a $3\times n$ strip, then only the infinite family defined by Figure~\ref{fig:3tree} cannot fold into $C$.
\end{itemize}
\end{theorem}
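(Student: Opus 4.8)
The plan is to reduce the folding question to a combinatorial ``cube-rolling'' labeling and then read off both directions---the non-foldable families and the foldability of everything else---from a single parity invariant. First I would argue that in the model $\F=\{\text{any}^\circ;\text{grid}\}$, where every square of $P$ must land on a face of $C$, only $90^\circ$ and $180^\circ$ grid folds are usable: the shared edge of two adjacent squares maps to a cube edge contained in exactly two faces, so a neighboring square either moves to one of the faces adjacent across that cube edge (a $90^\circ$ fold) or folds back onto the current face (a $180^\circ$ fold). Since $P$ is a tree, its dual graph has no cycles and hence no holonomy constraint, so \emph{every} choice of fold angles yields a valid placement of all squares on $C$; the only question is whether the induced assignment of squares to faces covers all six faces. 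Tracking this assignment as a developing map (``rolling the cube along $P$'') gives each square $(i,j)$ an \emph{axis} in $\{X,Y,Z\}$ together with a sign, and a direct computation shows that, fixing the cube's axes so that the strip runs east--west, a fold-free development assigns axis $X$ to every square with $i$ odd, axis $Z$ to squares with $i,j$ both even, and axis $Y$ to squares with $i$ even and $j$ odd. The crucial robust fact is that every horizontal grid fold is a rotation about the $Y$ (north--south) axis, so squares reachable from one another by horizontal folds never leave the four-face band $\{\pm X,\pm Z\}$; in particular a single row of $P$ can never cover both $+Y$ and $-Y$.

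For the non-foldability direction I would use this invariant directly. In a $2\times n$ strip the spine lies in one row and covers only faces of the band $\{\pm X,\pm Z\}$, missing the pair $\{+Y,-Y\}$; the tabs lie in the single adjacent row, and to cover both $+Y$ and $-Y$ they would need squares at columns of both even residues modulo $4$. The family of Figure~\ref{fig:2tree} is exactly the set of shapes whose tab row is too restricted to span both even residues (all tabs reaching the same polar face), so one of $+Y,-Y$ is never covered and the shape cannot fold. Here I would check that $180^\circ$ fold-backs, which only reverse the east--west direction within a band, cannot repair this, and that there is only one inter-row boundary so no vertical fold-back helps either. The $3\times n$ case is analogous: the two \emph{outer} rows both fall in a $\{\pm X,\pm Z\}$ band and only the \emph{middle} row can supply $\pm Y$, so a single height-$2$ tab (Figure~\ref{fig:3tree}) contributes only one square to the middle row and again misses one polar face.

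For the foldability direction I would show that every tree in the strip \emph{not} of the forbidden form contains a witness that lets us cover the otherwise-missing polar face: either two tab squares (resp. two middle-row squares) whose columns differ by $2 \pmod 4$, or a single tab extending at least two units horizontally, either of which reaches both $+Y$ and $-Y$. Given such a witness, I would fold the witnessing squares onto the two polar faces, fold the spine around the band $\{\pm X,\pm Z\}$ to cover the remaining four faces, and dispose of all excess squares with $180^\circ$ fold-backs onto already-covered faces---mirroring the transformation-and-fold constructions used for Theorem~\ref{le:w45} and depicted in Figures~\ref{fig:2tree} and~\ref{fig:3tree}. This reduces to a finite case analysis on the positions and parities of the tabs for part~(a), and on the placement of the height-$\ge 2$ portion together with its horizontal neighbors for part~(b).

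The main obstacle I expect is the completeness of the foldability direction: enumerating, up to the symmetries of the strip, all tree shapes that avoid the forbidden family and exhibiting an explicit covering for each, while checking that the constructed foldings are actually realizable on the convex cube (correct mountain/valley senses and a consistent layer ordering, so that the fold-backs used to remove excess squares do not force a self-intersection). The non-foldability direction should by contrast follow cleanly once the axis invariant and its robustness under fold-backs are in place.
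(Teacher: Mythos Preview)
Your rolling/axis invariant is on the right track, but the concrete formulation is wrong in a way that breaks both directions of the argument. Cube rolling on the square grid has nontrivial holonomy (the monodromy around a unit cell is a $120^\circ$ rotation about a body diagonal), so the face reached by a square at position $(i,j)$ is \emph{not} a function of $(i,j)$ alone; it depends on the path, i.e., on the tree structure of $P$. Your formula ``$i$ odd $\to X$; $i,j$ even $\to Z$; $i$ even, $j$ odd $\to Y$'' holds only for one particular spanning tree (row~1 connected as a horizontal path), which is exactly the opposite of the comb-like trees in the Figure~\ref{fig:2tree} family. For a comb---spine in row~0, each tab attached to the spine by a vertical edge---rolling east along the spine keeps ``north $=+Y$'' at every spine square, so a $90^\circ$ fold north sends \emph{every} tab center to $+Y$, regardless of the column modulo~4. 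Thus your mod-4 criterion is false for the shapes you need it for: two comb tabs at columns $0$ and $2$ both land on $+Y$, not on opposite poles. The correct obstruction for the forbidden family is that the reachable faces are contained in $\{\pm X,\pm Z,+Y\}$ (spine in the band; tab centers on $+Y$ or folded back into the band; tab neighbors one step east/west of a center land back in the band), so $-Y$ is never covered---but this is not the invariant you stated.

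The foldability direction inherits the same error: your proposed witnesses (``two tab squares whose columns differ by $2\bmod 4$'') do not produce both poles, so the construction does not go through. The paper takes a different route entirely: it fixes one vertical edge, measures the four subtrees $\alpha,\beta,\gamma,\delta$ hanging off its corners by their lengths when collapsed into a row, and shows that whenever a certain explicit disjunction of length inequalities holds one can fold away the excess to expose one of two concrete cube nets (Figure~\ref{fig:abcd}(a) or (d)); the Figure~\ref{fig:2tree} family is precisely the complement of that disjunction. A corrected invariant argument could likely handle the non-foldability direction cleanly, but for the foldability direction you would still need a structural reduction to an explicit net along the lines of the paper's $\alpha\beta\gamma\delta$ analysis, not a parity count.
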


\begin{figure}
\centering
\includegraphics[width=0.4\textwidth]{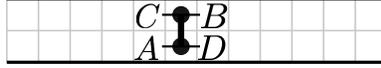}
\caption{\label{fig:2-vert} A vertical edge in a subset of a $2\times n$ strip with possible adjacent vertices for subtrees.}
\end{figure}
\begin{proof}
For the subset of a $2\times n$ strip consider one vertical edge, as shown in Figure~\ref{fig:2-vert}, and the possible subtrees attached at $\alpha$, $\beta$, $\gamma$ and $\delta$. One such vertical edge has to exist, otherwise the strip is a $1\times n$ strip and never folds to a cube.
We consider the length of subtrees attached at $\alpha$, $\beta$, $\gamma$ and $\delta$ when folded to the same row as this ``docking'' unit square to the vertical edge. With slight abuse of notation we refer to these lengths by $\alpha$, $\beta$, $\gamma$ and $\delta$ again.

\begin{figure}
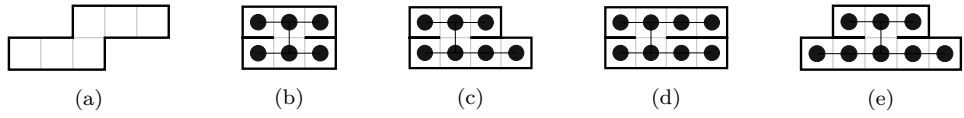

\centering
\comic{.17\textwidth}{6cube.eps}{(a)}\hfill
    \comic{.1\textwidth}{abcd-1.eps}{(b)}\hfill
\comic{.13\textwidth}{abcd-2.eps}{(c)}\hfill
\comic{.13\textwidth}{abcd-3.eps}{(d)}\hfill
\comic{.17\textwidth}{abcd-4.eps}{(e)}
         \caption{\label{fig:abcd} In $\F = \{$any$^{\circ}$; grid$\}$ only the shapes in (a) and (d) folds into a unit cube $C$, the shapes in (b), (c) and (e) do not fold into $C$.}
\end{figure}

\def\AND{\;\mbox{\textsc{and}}\;}
\def\OR{\;\mbox{\textsc{or}}\;}

The first observation is that, 
if $(\alpha\geq2 \AND \beta\geq2) \OR (\gamma\geq2 \AND \delta\geq2)$, then $P$ folds to a cube. In this case we can fold all other squares away to end up with the shape in Figure~\ref{fig:abcd}(a).  
Not included in this categorization are the four shapes shown in Figure~\ref{fig:abcd}(b)--(e). Of those only (d) folds into a cube.

\begin{figure}[bt]
\centering
    \comicII{\columnwidth}{2tree.eps}
         \caption{\label{fig:2tree} An example polyomino from an infinite family that cannot fold into a unit cube.
         }
\vspace*{-.1cm}
\end{figure}

\begin{figure}[b]
\centering
    \comicII{\columnwidth}{3tree.eps}
         \caption{\label{fig:3tree} Infinite family that cannot fold into a unit cube, with at least a height 2 red subset (others are optional).
         }
\end{figure}

Thus, if we characterize more precisely which cases yield the shapes from Figure~\ref{fig:abcd}(a) and (d), we obtain a cube for:
\begin{align*}
\{(\alpha\geq 2 \OR \delta\geq3)  \AND (\beta\geq 2 \OR \gamma\geq3)\}&\OR\\
\{(\alpha\geq 3 \OR \delta\geq2)  \AND (\beta\geq 3 \OR \gamma\geq2)\}&\OR\\
\{(\alpha\geq 2 \AND \gamma\geq3)  \OR (\alpha \geq 3 \AND \gamma\geq2)\}&\OR\\
\{(\beta\geq 2 \AND \delta\geq3)  \OR (\beta\geq 3 \AND \delta\geq2)\}&\OR\\
\{(\alpha\geq 1 \AND \gamma\geq1 \AND \beta\geq 2 \AND \delta\geq2)\}&\OR\\
\{(\alpha\geq 2 \AND \gamma\geq2 \AND \beta\geq 1 \AND \delta\geq1)\}&\,.
\end{align*}

Figure~\ref{fig:2tree} shows an example of a $2 \times n$ polyomino that does not satisfy the above conditions and thus does not fold to a unit cube.

For the subset of a $3\times n$ strip,
we denote the rows of the height $3$ strip by row 1, row 2 and row 3. 
In case we have more than three unit squares in two different rows, we can fold over squares to obtain the shape from Figure~\ref{fig:abcd}(a), which folds into a unit cube. Otherwise, only one row spans the length of the strip.
If there are vertical edges adjacent to a $1\times n$ strip to two different sides (above and below), that is, the $1\times n$ strip is located in row 2, this folds to a cube. Consequently, if we have height three and cannot fold the shape into a unit cube, a long $1\times n$ strip cannot be located in the center row, row 2, of the shape. 
Without loss of generality, let the $1\times n$ strip be located in the lowest row, row 1. 
As we have height three, there is at least
a height two part which is a subset of the red part in Figure~\ref{fig:3tree} connected to the strip. That is, there are unit squares in row 2 and row 3, occupying a connected subset of the red part in  Figure~\ref{fig:3tree}. Let these unit squares be denoted by $P_{\mbox{red}}$.
If there exists another vertical edge of length at least one, $P_1$ that is only adjacent to the $1\times n$ strip (but not directly to $P_{\mbox{red}}$) we can fold over $P_{\mbox{red}}$ and obtain a case from above which can easily be folded to a unit cube. Consequently, only a single  vertical subset of length two can be attached, as shown in red in Figure~\ref{fig:3tree}.%
%
%
\end{proof}


\subsection{Enumeration of cube-foldable polyominoes}\label{sec:const}

In this section we present results on folding polyominoes of constant
size into a cube. We consider polyominoes of up to 14 unit squares.
Our
results have been obtained by exhaustive computer search. 
The second column of Table~\ref{tab:smallcubes} shows the number of different
\emph{free polynominoes} with a given number of squares. Here, ``free'' means that elements which can be transformed into
each other by translation, rotation and/or reflection are counted only
once.
For polyominoes whose dual (c.f. Section~\ref{sec:not}) is not a tree, that is, their connecting graph
contains cycles, we considered all possible dual trees. 
Thus, we first
generated all such dual trees for polyominoes of size up to 14. 
The third column 
of Table~\ref{tab:smallcubes} shows their number,
compared to the number of different free polyominoes.
While the number of different free polyominoes is currently
known for shapes of size up to 28 (\url{http://oeis.org/A000105}), the
number of different dual trees was known only for up to 10 elements
(\url{http://oeis.org/A056841}).

\begin{table}
\centering
{
\begin{tabular}{|r||*{6}{r|}}
\hline
$n$ & \multicolumn{1}{c|}{free}  & \multicolumn{1}{c|}{dual}  & \multicolumn{3}{c|}{foldable}     & not \\
\cline{4-6}
    & polyominoes &\multicolumn{1}{c|}{trees}   & with $\pm 90^{\circ}$  & and $\pm 180^{\circ}$ & and diagonal  &  \\
\hline
\hline
 2 &       1 &        1 & & & & \\
 3 &       2 &        2 & & & & \\
 4 &       5 &        5 & & & & \\
 5 &      12 &       15 & & & & \\
 6 &      35 &       54 &      11 &      0 &    0 & 43 \\
 7 &     108 &      212 &      90 &     24 &   39 & 59 \\
 8 &     369 &      908 &     571 &    175 &  126 & 36 \\
 9 &    1285 &     4011 &    3071 &    697 &  233 & 10 \\
10 &    4655 &    18260 &   15645 &   2230 &  385 &  0 \\
11 &   17073 &    84320 &   77029 &   6673 &  618 &  0 \\
12 &   63600 &   394462 &  374066 &  19337 & 1059 &  0 \\
13 &  238591 &  1860872 & 1803568 &  55477 & 1827 &  0 \\
14 &  901971 &  8843896 & 8682390 & 158208 & 3298 &  0 \\
\hline
\end{tabular}
}%
\caption{Different ways of folding small polyominoes into a cube.}
\label{tab:smallcubes}
\end{table}

We checked for each of the generated dual trees of a polyomino (of size six or more) whether it can be folded into a unit cube.
We did this in three different
steps. In the first step, we only allowed $\pm 90^{\circ}$ folds. Column~4 of
Table~\ref{tab:smallcubes} shows how many (dual trees of) polyominoes
can be folded to a cube with this restriction. It is interesting to observe that
while for $n=6$ only 11 polyominoes can be folded to a cube, for $n=14$ over $98\%$ of all shapes can be folded to a cube.

In the second step we tried $\pm 90^{\circ}$ and $\pm 180^{\circ}$ folds for
the remaining dual trees. Table~\ref{tab:smallcubes} gives in column~5
how many additional cube foldings can be obtained this way. For the dual trees for which this step found a folding, we never needed to use more than two $\pm 180^{\circ}$ folds. For $n=11$ and
$n=12$ there are each only one example which needs two $\pm 180^{\circ}$
folds, and we found no such 
examples for $n \geq 13$: for those values of $n$ all foldable examples could be folded with just one
$\pm 180^{\circ}$ fold.

In the last step we allowed also diagonal folds for the remaining dual trees. 
Column~6 of Table~\ref{tab:smallcubes} shows how many additional dual trees can be folded 
in this case,
and the last column 
gives the number of remaining
(non-foldable) dual trees. The most interesting result here is that
for $n \geq 10$ all polyominoes, regardless of which dual tree we select
for them, can be folded in this way. This is similar to Theorem~\ref{le:w45}, except that here we do not allow half-grid folds, but instead allow covering a cube face with triangles from diagonal folds.
Moreover, all such foldings need at
most one diagonal fold, with the exception of the $7\times 1$ strip,
the only example we found that needs two diagonal folds.

\begin{figure}
\centering
\includegraphics[width=\textwidth]{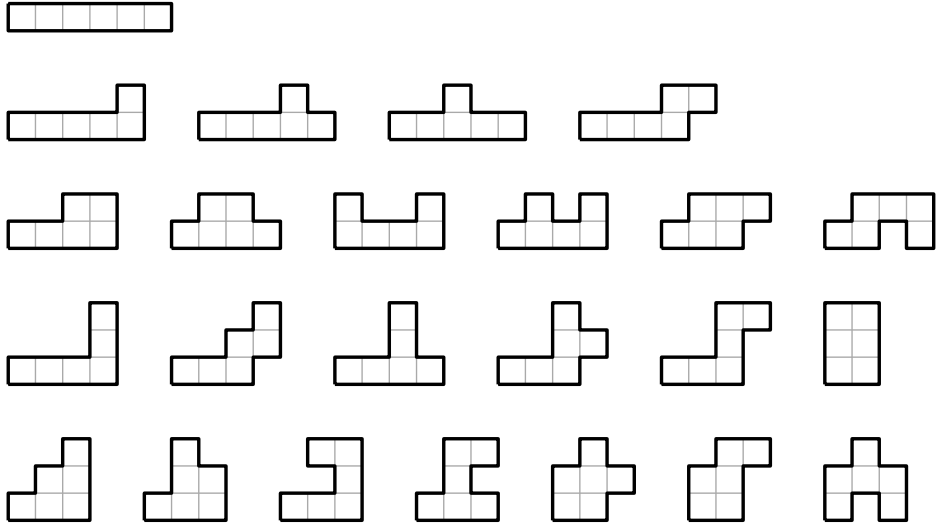}
\caption{Polyominoes of size 6 that cannot be folded to a cube (for any dual tree).}
\label{fig:polyominoes06nocube}
\end{figure}

\begin{figure}
\centering
\includegraphics[width=\textwidth]{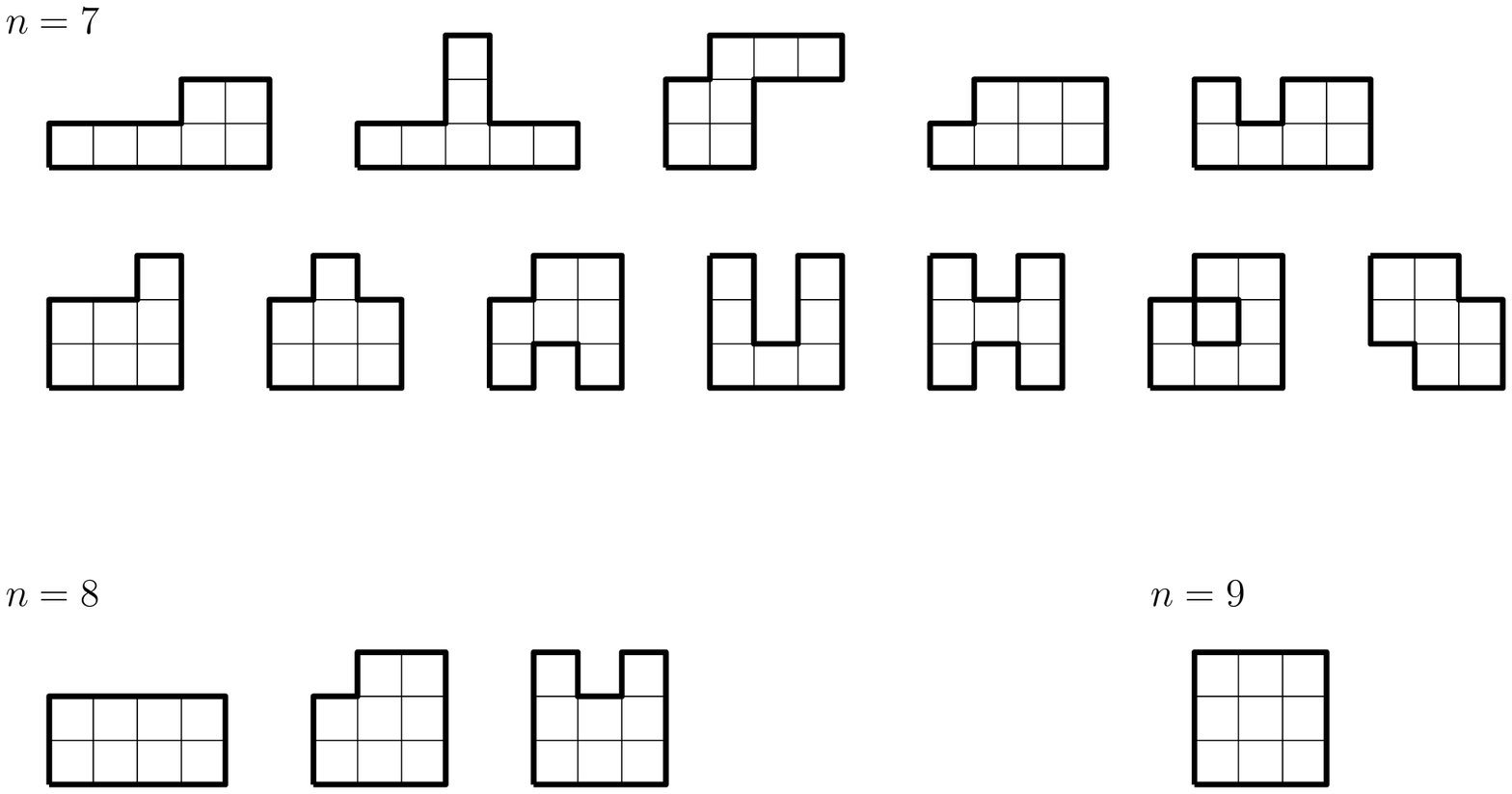}
\caption{Polyominoes of size 7 to 9 that have dual trees which cannot be folded to a cube.}
\label{fig:polyominoes070809nocube}
\end{figure}

Figures~\ref{fig:polyominoes06nocube}
and~\ref{fig:polyominoes070809nocube} show all polyominoes of size $n \geq
6$ for which dual trees exist (for $n=6$ for any existing dual tree), such that they cannot be folded to the
unit cube using $90^{\circ}$, $180^{\circ}$, and diagonal
folds. There are $24$ such polyominoes with a total of $43$ different dual trees for
$n=6$, $12$ polyominoes with $59$ dual trees for $n=7$, $3$ polyominoes with
$36$ dual trees for $n=8$, and one polyomino with $10$ dual trees for $n=9$.
Note, however, that for many of them there are cuts (i.e., dual trees)
such that they can be folded to the cube. For example, the $3\times 3$ square has $18$ dual trees that can fold
to a cube (Figure~\ref{fig:puzzle09}).

%
\begin{figure}[t]
\centering
\includegraphics[width=.7\columnwidth]{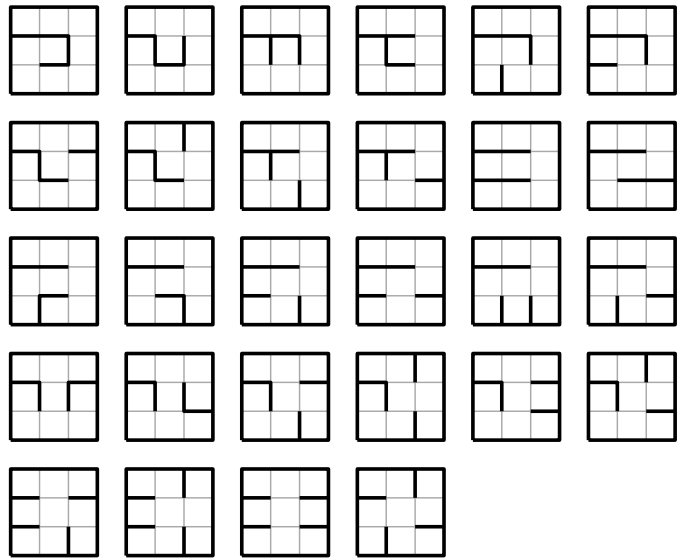}
\caption{\label{fig:puzzle09}
Puzzle\protect\footnotemark: which cuttings of a $3\times 3$ square
fold to a cube?
}
\end{figure}

{\interfootnotelinepenalty=10000
\footnotetext{Solution: The top row needs two $180^{\circ}$ folds,
and the second and third row need one $180^{\circ}$ fold. The last two rows cannot be
folded, even with diagonal folds. Note, however, that some of these cuttings
can be folded if we allow half-grid folds, cf. Figure~\ref{fig:lb9}.}
}

\section{Dynamic programming algorithm for trees}\label{sec:dp-tree}

\begin{theorem}
\label{thm:dp-tree}
Let $P$ be a tree shape, $Q$ be a polycube with $O(1)$ cubes, no four squares meeting at an edge, and $\mathcal F=\{ \pm 90^\circ\text{; grid}\}$. Then it is possible in linear time (in the size of $P$) to determine whether $P$ can fold to $Q$ in folding model $\mathcal F$.
\end{theorem}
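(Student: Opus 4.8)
The plan is to turn the question into a constraint-satisfaction problem on the tree $P$ and solve it by a bottom-up dynamic program whose table size is constant because $Q$ is of constant size. The starting point is that in $\mathcal F=\{\pm 90^\circ;\text{grid}\}$ no interior faces are permitted, so in any valid folding every unit square of $P$ is mapped rigidly onto some boundary face of $Q$. I record the image of a square as a \emph{placement}: a pair (boundary face of $Q$, one of the at most eight congruences of the unit square onto that face). Since $Q$ has $O(1)$ cubes it has some constant number $m=O(1)$ of boundary faces, hence only $O(1)$ placements in total. For a tree edge $e=(\sigma,\tau)$ of $P$ with shared square-edge $\epsilon$, the crease at $\epsilon$ is either unfolded (the two images stay coplanar), a $+90^\circ$ fold, or a $-90^\circ$ fold; in each case the placement of $\tau$ is \emph{determined} by the placement of $\sigma$, or the choice is invalid if the required neighbouring face of $Q$ does not exist. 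This defines, for every tree edge, a relation $R_e$ of constant size listing the legal pairs (placement of $\sigma$, placement of $\tau$). This is where the hypothesis that no four squares meet at an edge is used: it guarantees that the surface of $Q$ is an edge-manifold, so every boundary edge bounds exactly two faces and the ``partner face across $\epsilon$'' (hence the target placement of $\tau$) is unambiguous.

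The second ingredient is that, because $P$ is a \emph{tree}, there are no cycle-closure constraints: any assignment of placements to the squares that is consistent with all the edge relations $R_e$ is realized by an actual folding, since self-overlap is allowed in the origami model and no global non-crossing condition need be checked. Thus deciding foldability amounts to deciding whether there is a placement assignment, consistent along every tree edge, whose images \emph{cover} all $m$ boundary faces of $Q$. The only non-local requirement is this coverage condition, and it is handled by carrying a bitmask: for a face index set of size $m=O(1)$ there are only $2^m=O(1)$ subsets.

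I would then run the standard tree dynamic program. Rooting $P$ arbitrarily, define $f(v,p,U)$ to be true exactly when the subtree at $v$, with $v$ placed as $p$, can be folded so that the faces covered by that subtree are precisely $U\subseteq\{1,\dots,m\}$. A leaf gives $f(v,p,\{c(p)\})=\mathrm{true}$, where $c(p)$ is the face covered by $p$. At an internal node I merge the children one at a time by an OR-convolution over subsets: maintaining a table $D[U]$ for the node together with the children already processed, each new child updates $D[U_1\cup U_2]\leftarrow D[U_1]\wedge E[U_2]$, where $E[U_2]=\bigvee_{(p,p')\in R_e} f(\text{child},p',U_2)$. Each such merge costs $O(4^m)=O(1)$, and each placement $p$ of the node is handled separately; summing the per-edge work over the whole tree gives $O(n)$ total time. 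Finally $P$ folds to $Q$ if and only if $f(\text{root},p,\{1,\dots,m\})$ is true for some placement $p$.

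The routine algorithmic part is the tree DP; the real work is the geometric setup of the first two paragraphs. The main obstacle I expect is proving rigorously that this finite local model faithfully captures foldings---specifically (i) that a square's image is always a single boundary face carrying one of finitely many placements, and that the flat, $+90^\circ$, and $-90^\circ$ transitions exhaust the legal local behaviour and determine the neighbour uniquely (exactly where ``no four squares meeting at an edge'' is invoked), and (ii) that for a tree, edge-by-edge consistency is equivalent to the existence of a genuine global folding. Once these two facts are in place, correctness of the bitmask dynamic program and its linear running time follow directly.
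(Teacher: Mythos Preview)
Your approach is essentially the paper's: root the tree, record for each square a placement (position on $Q$'s surface together with the subset of $Q$'s faces covered by its subtree), and do a bottom-up dynamic program with constant-size tables. The paper's version is terser---it bundles the coverage bitmask into the word ``placement'' and leaves the merge step implicit---but the content is the same, and your reading of the ``no four squares at an edge'' hypothesis (surface is an edge-manifold, so the neighbour across a crease is unambiguous) is exactly how it is used.

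One point to sharpen. Your stated reason for (ii)---``self-overlap is allowed in the origami model and no global non-crossing condition need be checked''---conflates two things. Multiple layers on the same face \emph{are} allowed, but the paper is still not permitted to pass through itself, so there \emph{is} a global non-crossing constraint; the DP as written only constructs an immersion. The paper addresses this in the paragraph following the proof: for $\pm 90^\circ$ folds on a tree, the immersion always lifts to an embedding, by induction on the number of squares---adding one leaf via a flat or $\pm 90^\circ$ fold lands it on a face of $Q$, where it can always be slipped above or below the existing layers with nothing blocking it. This is precisely what fails for $180^\circ$ folds (a new square can be trapped inside a pocket created by an earlier $180^\circ$ fold), which is why the theorem is restricted to $\pm 90^\circ$. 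You correctly flag (ii) as the real obstacle; this inductive argument is the missing ingredient.
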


\begin{proof}
We choose an arbitrary root of $P$. For each square $s$ of~$P$ define the subtree of~$s$ to be the tree shape consisting of all squares whose shortest path in~$P$ to the root passes through~$s$.
For a square $s$ of~$P$, define a \emph{placement} of~$s$ to be an identification of~$s$ with a surface square of~$Q$ together with the subset of the squares of~$Q$ covered by squares in the subtree of~$s$. We use a dynamic program that computes, for each square $s$ of~$P$, and each placement of~$s$, whether there is a folding of the subtree of $s$ that places $s$ in the correct position and correctly covers the specified subset.
Each square has $O(1)$ placements, and we can test whether a placement has a valid folding in constant time given the same information for the children of $s$. Therefore, the algorithm takes linear time.
\end{proof}

We have been unable to extend this result to folding models that allow $180^\circ$ folds, nor to folds with interior faces, nor to polycubes for which four or more squares meet at an edge. The difficulty is that the dynamic program constructs a mapping from the polyomino to the polycube surface (topologically, an \emph{immersion}) but what we actually want to construct is a three-dimensional embedding of the polyomino without self-intersections, and in general testing whether an immersion can be lifted to a three-dimensional embedding is  NP-complete~\cite{EppMum-SODA-09}.
For $90^\circ$ folds, a three-dimensional lifting always exists, as can be seen by induction on the number of squares in the tree shape: given a folding of all but one square of the tree shape, there can be nothing blocking the addition of the one remaining square to its neighbor in the tree shape.
However, in the presence of $+180^\circ$ folds, the addition of a square to an edge of a tree shape can be blocked by a $+180^\circ$ fold surrounding that edge.


It is tempting to attempt to extend our dynamic program to a
fixed-parameter algorithm for non-trees (parameterized by feedback vertex
number in the dual graph of the polyomino), 
by finding an approximate minimum
feedback vertex set, trying all placements of the squares in this set, and
using dynamic programming on the remaining tree components of the graph.
However, the problem of parts of the fold blocking other parts of the fold becomes even more severe in this case, even for $90^\circ$ folds. Additionally, we must avoid knots and twists in the three-dimensional embedding.  These issues make it difficult to extend the
dynamic program to the non-tree case.

\section{Triangular Grid}\label{sec:triang}

In this section we discuss folding \emph{polyiamonds} into \emph{polytetrahedra}. A \emph{polyiamond} is a 2D polygon that is a connected union of a collection of equilateral triangles on an equilateral triangular lattice. That is, polyiamonds are the analog to polyominoes with equilateral triangles instead of unit squares. We use the same notation as with polyominoes, so the number of triangles comprising a polyiamond is called the \emph{size} of the polyiamond, and we say a polyiamond is a \emph{tree-polyiamond} if its dual graph is a tree. A \emph{polytetrahedron} extends the notion of a \emph{polycube}, simply replacing unit cubes with regular tetrahedra. (Note, however, that the shapes of polytetrahedra are limited by the fact that regular tetrahedra do not tile space.)

We begin by noting that there are exactly $6$ polyiamonds that are tree shapes but do not fold into a tetrahedron.

\begin{lemma}\label{le:polyiamond}
In the folding model $\F = \{$any$^{\circ}$; $\Delta$grid$\}$, a tree-polyiamond folds into a tetrahedron if and only if it is not one of the following shapes:
\begin{center}
\includegraphics[width=0.75\textwidth]{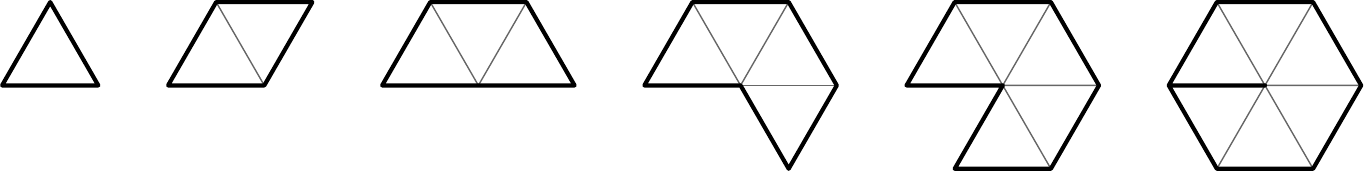}
\end{center}
\end{lemma}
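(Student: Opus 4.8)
The plan is to reduce the statement to a clean combinatorial characterization and then finish with a short geometric argument. Write $T$ for the regular tetrahedron, and recall that $T$ has exactly two nets up to symmetry: the \emph{triangle net} (four triangles forming one large triangle, equivalently a triangle together with all three of its lattice-neighbors) and the \emph{strip net} (four triangles in a straight row, a parallelogram). These correspond to the two isomorphism types of spanning tree of the face-adjacency graph $K_4$ of $T$, namely the star and the path. The key reduction I would prove first is: \emph{a tree-polyiamond $P$ folds into $T$ if and only if it contains one of these two nets as a connected sub-polyiamond.} The ``if'' direction is where the tree hypothesis is used and is easy: prune $P$ down to the net $N$ by repeatedly folding a leaf triangle of the current shape (one not belonging to $N$) flat onto its neighbor with a $180^\circ$ fold. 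Since $N$ is connected and $P$ is a tree, such a pruning order exists; each folded-away triangle comes to rest coincident with an existing triangle, so it rigidly follows that triangle when $N$ is afterwards folded onto $T$ and never obstructs the folding. Hence $P$ folds whenever it contains a net.

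The structural heart of the argument is to show that \emph{the tree-polyiamonds containing no net are exactly the six ``fans'' $F_1,\dots,F_6$, where $F_k$ consists of $k$ triangles sharing a common lattice vertex} (these are the six shapes in the figure). A triangle net is present precisely when some triangle has all three of its lattice-neighbors present, i.e.\ when the dual tree has a vertex of degree $3$; so a net-free shape has dual of maximum degree $2$, which, being a tree, is a path, and $P$ is a ``snake''. For a snake, record the sequence $d_1,d_2,\dots$ of lattice-directions of its successive fold edges. Consecutive directions always differ (the two fold edges at an internal triangle are distinct edges of one triangle, hence of different directions), and a strip net occurs exactly when some three consecutive directions alternate, $d_i=d_{i+2}$. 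Thus a net-free snake has every three consecutive directions pairwise distinct, which forces the sequence to cycle through all three lattice-directions in one fixed cyclic order. A short local computation then ``locks'' a pivot vertex: the two fold edges at the second triangle meet at a vertex $v$, and the no-strip condition forces every later fold edge to pass through $v$ as well, so all triangles share $v$ and $P$ is a fan. Since only six triangles surround a lattice vertex, a fan has at most six triangles, giving precisely $F_1,\dots,F_6$ (with $F_6$ the hexagonal ring realized as a snake via one cut).

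Finally I would show that each fan fails to fold, which gives the ``only if'' direction. In any folding of a fan, its central vertex $v$ maps to a single surface point $p$ of $T$: consecutive triangles agree on the image of $v$ because every fold edge contains $v$, and this agreement propagates along the whole path. Each triangle of the fan maps to a flat unit triangle having $p$ as a corner, so it lies inside a single face of $T$ incident to $p$. But every point of $T$ lies on at most three faces (three at a vertex, two on an edge interior, one in a face interior), so a fan covers at most three of the four faces of $T$ and never folds onto $T$. Combining the three parts: a non-exceptional tree-polyiamond contains a net and hence folds, while each of the six fans does not, which is exactly the claimed characterization.

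I expect the main obstacle to be the middle step, the proof that a net-free snake must be a fan. The direction-sequence bookkeeping is delicate: one must verify that the geometrically realizable continuation of an $ABCABC$ direction pattern cannot ``escape'' the pivot vertex without forcing a straight strip of four triangles, and presenting this crisply (rather than as an unwieldy coordinate chase) is the crux of the whole argument.
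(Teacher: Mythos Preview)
Your proposal is correct and follows essentially the same route as the paper: identify the two tetrahedron nets, show that any tree-polyiamond containing one of them folds (by pruning leaves onto the net), show that a net-free tree-polyiamond has a path dual and must be one of the six fans around a common vertex, and finish by observing that a fan cannot cover the face opposite that vertex. The only real difference is that you flesh out the ``net-free snake $\Rightarrow$ fan'' step with an explicit direction-sequence argument, whereas the paper dispatches it in a single sentence; your worry about that step being the crux is therefore somewhat overstated, and your inductive pivot-vertex argument goes through cleanly.
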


\begin{proof}
The tetrahedron has two nets: one whose dual has a vertex of degree $3$, and one whose dual is a path of length $4$ and whose triangles do not all share a vertex. If a polyiamond contains either of the two nets, those four triangles may be folded into a tetrahedron with all of the other triangles of the polyiamond folded flat against the faces of the tetrahedron.

Therefore, we need to show that the above polyiamonds are the only polyiamonds that do not contain a net of a tetrahedron, and that none of the six can be folded into a tetrahedron in some other way. If the dual tree of the polyiamond contains a vertex of degree $3$, the polyiamond contains a net of the tetrahedron and can therefore be folded into a tetrahedron. Therefore, it is sufficient to consider polyiamonds whose dual trees are paths. If we examine a polyiamond whose dual is a path that does not contain the second tetrahedral net, either the polyiamond contains fewer than $4$ triangles, or each of its triangles share a single vertex. The only possibilities are the six mentioned above.
None can cover a tetrahedron, because they cannot reach the tetrahedron face that is opposite the shared vertex.
\end{proof}
\section{Conclusion}\label{sec:con}


Various open problems remain.

First, we gave an example of a tree shape $P$ that does fold into a polycube $Q$ for $\F = \{$any$^{\circ}$; interior faces; grid$\}$, but not in weaker models. In particular, $P$ has no folding into $Q$ that avoids using interior faces. $Q$ consists of 5 unit cubes---is it minimal?

Second, we characterized tree shapes that are a subset of a $3 \times n$ strip and fold into a unit cube in the $\F = \{$any$^{\circ}$; grid$\}$ model. Can we characterize polyominoes with holes (possibly of area zero) that fold into a unit cube, such as the original motivating puzzles in Figure~\ref{puzzles}, or our new puzzle in Figure~\ref{fig:puzzle2}?  What about polyominoes that are not a subset of a $3 \times n$ strip?

Third, if a tree shape folds into a unit cube, can the folding be performed while keeping the faces rigid (continuous blooming \cite{Blooming_GC})?
        
\begin{figure}[h]
\centering
    \includegraphics[width=1.3in]{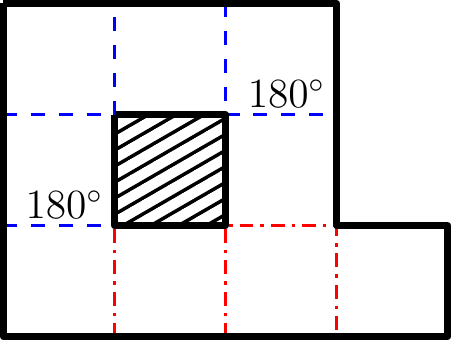}
         \caption{\label{fig:puzzle2} A polyomino with 9 squares and a hole that folds into a unit cube. Mountain (red) and valley folds (blue) are indicated, where two of the valley folds are $180^\circ$ folds.
         }
\end{figure}

\section*{Acknowledgments}
This research was performed in part at the 29th Bellairs Winter Workshop on Computational Geometry. We thank
all other participants for a fruitful atmosphere.
Oswin Aichholzer was partially supported by the ESF EUROCORES programme EuroGIGA --
CRP ComPoSe, Austrian Science Fund (FWF): I648-N18.
Irina Kostitsyna was supported in part by the NWO under project no. 612.001.106, and by F.R.S.-FNRS.
David Eppstein was supported by NSF Grants CCF-1228639, CCF-1618301, and CCF-1616248 and ONR/MURI Grant No. N00014-08-1-1015.

\small
\bibliographystyle{plain}
\bibliography{folding}

\end{document}